\documentclass[journal]{IEEEtran}
\usepackage{amsmath, amssymb, amsthm}
\usepackage{algorithmic}
\usepackage{algorithm}
\usepackage{array}
\usepackage[caption=false,font=small,labelfont=sf,textfont=sf]{subfig}

\usepackage{textcomp}
\usepackage{stfloats}
\usepackage{url}
\usepackage{verbatim}
\usepackage{cite}
\usepackage{graphicx}

\usepackage{multirow}
\usepackage{booktabs}
\usepackage{kotex}
\usepackage{color}

\newcommand{\Ncp}{N_{\text{cp}}}
\newcommand{\Ns}{N_{\text{s}}}
\newcommand{\NRB}{N_{\text{RB}}}
\newcommand{\NRE}{N_{\text{RE}}}
\newcommand{\NTBS}{N_{\text{TBS}}}
\newcommand{\Ninfo}{N_{\text{info}}}

\newcommand{\Nt}{N_{\text{T}}}

\newcommand{\Nr}{N_{\text{R}}}

\newcommand{\snreesm}{\text{SNR}_{\text{EESM}}}

\newcommand{\diag}{\text{diag}}

\newcommand{\norm}[1]{\left\lVert#1\right\rVert}

\DeclareMathOperator*{\argmax}{arg\,max}

\newcommand{\figref}[1]{Fig.~\ref{#1}}
\newcommand{\tabref}[1]{Table~\ref{#1}}
\newcommand{\secref}[1]{Section~\ref{#1}}
\newcommand{\algref}[1]{Algorithm~\ref{#1}}

\def\bb0{{\mathbb{0}}}

\def\ba{{\mathbf{a}}}
\def\bb{{\mathbf{b}}}

\def\bp{{\mathbf{p}}}

\def\bs{{\mathbf{s}}}

\def\bv{{\mathbf{v}}}

\def\bx{{\mathbf{x}}}
\def\by{{\mathbf{y}}}

\def\b0{{\mathbf{0}}}

\def\bA{{\mathbf{A}}}

\def\bF{{\mathbf{F}}}

\def\bH{{\mathbf{H}}}
\def\bI{{\mathbf{I}}}

\def\bU{{\mathbf{U}}}
\def\bV{{\mathbf{V}}}

\def\bbC{{\mathbb{C}}}

\def\bbE{{\mathbb{E}}}

\def\bbR{{\mathbb{R}}}

\def\cA{\mathcal{A}}

\def\cI{\mathcal{I}}

\def\cM{\mathcal{M}}
\def\cN{\mathcal{N}}

\def\cR{\mathcal{R}}

\def\sf0{{\mathsf{0}}}

\def\rm0{{\mathrm{0}}}

\newtheorem{theorem}{Theorem}

\theoremstyle{definition}
\newtheorem{remark}{Remark}

\begin{document}
\bstctlcite{IEEEexample:BSTcontrol}

\title{Energy Efficiency Optimization in Distributed MIMO vRAN via Cross-Layer Link Abstraction}

\author{Jaebum Park,~\IEEEmembership{Graduate Student Member,~IEEE},~Chan-Byoung Chae,~\IEEEmembership{Fellow,~IEEE}, and\\ Robert W. Heath Jr.,~\IEEEmembership{Fellow,~IEEE}

\thanks{This material is based upon work supported in part by the National Science Foundation under Grant No. NSF ECCS-2414678 and in part by the Army Research Office under Grant W911NF2410107, and in part by the Korean Government under Grants (RS-2026-25525793, RS-2024-00397216).}
\thanks{J. Park and R. W. Heath, Jr. are with the Department of Electrical and Computer Engineering, University of California at San Diego, La Jolla, CA 92093, USA (emails: \{jp261, rwheathjr\}@ucsd.edu).} 
\thanks{C.-B. Chae is with the School of Integrated Technology, Yonsei University, Seoul 03722, South Korea (email: cbchae@yonsei.ac.kr).}}

\maketitle


\begin{abstract}
Virtualized radio access networks (vRAN) run the compute-intensive multiple-input multiple-output (MIMO) baseband as software on shared servers, which makes energy efficiency (EE) a primary design objective. Distributed MIMO vRAN consumes power across virtualized distributed unit (vDU) baseband, fronthaul transport, and per-radio-unit operation. We build a power model that resolves these three components. We then develop a framework that jointly selects modulation, transmission rank, and per-subcarrier power to maximize system EE. Exponential effective SNR mapping induces a convex per-subcarrier power constraint, which yields a convex power minimization problem with a closed-form waterfilling-like solution. We show that radio frequency-only models underestimate the spectral efficiency range where single-input multiple-output (SIMO) transmission saves power, and our power model extends this range by 24\%. We further extend the framework to a traffic-aware setting with realistic user trajectories from the multi-agent transport simulator. We propose a traffic-aware strategy that switches each radio unit among MIMO, SIMO, and sleep modes based on demand. Simulation results over 3GPP NR compliant fading channels show that, after a one-time offline calibration, the framework predicts link performance without further link-level simulation. The proposed framework achieves higher average EE than a traffic-agnostic always-on MIMO baseline, while maintaining comparable throughput at peak hours.
\end{abstract}

\begin{IEEEkeywords}
Energy efficiency, 5G NR, effective SNR mapping, EESM, cross-layer optimization, virtualized RAN, power consumption, distributed MIMO, link abstraction.
\end{IEEEkeywords}

\section{Introduction}

\IEEEPARstart{M}{ultiple}-input multiple-output (MIMO) virtualized radio access networks (vRAN) move the compute-intensive MIMO baseband, including channel estimation, precoding, and detection, from dedicated hardware to software on shared general-purpose servers~\cite{ParkEtAlAcceleratingVRANORAN2026}. Functional splits place each function where it runs most efficiently, and centralizing baseband across cells makes coordinated transmission such as network MIMO practical. Pooling compute on shared servers scales with load and lowers cost, while open interfaces reduce vendor lock-in and let operators deploy new MIMO methods in software~\cite{oranWG4CUS}. The MIMO baseband processing is the most compute-intensive part of the RAN. Its energy grows with the number of antennas, users, and offered load. Energy efficiency (EE) in MIMO vRAN is therefore a primary design objective. RAN already dominates mobile network energy consumption, and 5G NR densification amplifies this share~\cite{GreenFutureNetworks2021}.

Optimizing EE in vRAN demands a new modeling approach, because the architecture spreads power across components that traditional models never captured. vRAN disaggregates the base station into a vDU and geographically separated radio units (RUs) connected via fronthaul links, which introduces vDU baseband, fronthaul transport, and per-RU power components that radio frequency (RF)-only EE frameworks do not model. Cell-free distributed MIMO (D-MIMO) compounds this challenge, since coherent joint transmission across distributed RUs~\cite{NgoEtAlCellFreeMassiveMIMO2017a} scales both vDU load and fronthaul transport with the number of active RUs. Energy-efficient D-MIMO vRAN operation therefore requires a power model that resolves these components, an EE optimization that jointly selects modulation, transmission rank, and power under a closed-form structure, and a control strategy that adapts RU operation to spatio-temporal traffic. In this paper, we propose a new model for EE, design an algorithm that adapts the MIMO strategy to maximize EE, and generalize the algorithm to allow partial sleep of the hardware.

\subsection{Prior work}

Prior work on modeling EE in MIMO systems has built detailed power models for base stations and cellular networks. Early base-station power models extended to MCS-dependent characterizations for 5G systems~\cite{AuerEtAlHowMuchEnergy2011, PiovesanEtAlMachineLearningAnalytical2022}. Empirical studies on production vRAN servers~\cite{KaliaEtAlEnergyEfficient5G2025} reported that CPU power varies with sub-second workload fluctuations. At the system level, several vRAN and Open-RAN models decomposed power into DU, RU, and transport components~\cite{NgoEtAlTotalEnergyEfficiency2018, SinghEtAlEnergyEfficientOrchestrationMetroScale2021, DemirEtAlCellfreeMassiveMIMO2024, TopalEtAlUnlockingEnergySavingPotential2026}. These models addressed metro-scale orchestration with vDU placement and RU activation~\cite{SinghEtAlEnergyEfficientOrchestrationMetroScale2021}, cloud-fronthaul-radio joint orchestration for cell-free massive MIMO~\cite{DemirEtAlCellfreeMassiveMIMO2024}, and wireless-fronthaul extensions~\cite{TopalEtAlUnlockingEnergySavingPotential2026}. Unfortunately, none of these models account for the vDU baseband, fronthaul transport, and per-RU components of a D-MIMO vRAN simultaneously.

Prior work on adaptation in MIMO systems has optimized EE by allocating rate and power across users and subcarriers. Most work relied on Shannon capacity formulations~\cite{IshedenEtAlFrameworkLinkLevelEnergy2012, ShenYuFractionalProgrammingCommunication2018, ShenYuFractionalProgrammingCommunication2018a}, demonstrating EE gains in single-cell OFDMA systems~\cite{NgEtAlEnergyEfficientResourceAllocation2012}, frequency-selective OFDM link adaptation with circuit power~\cite{MiaoEtAlEnergyefficientLinkAdaptation2010}, multi-cell NOMA networks~\cite{FangEtAlEnergyEfficientResourceAllocation2016}, and multi-user MIMO systems~\cite{BjornsonEtAlOptimalDesignEnergyefficient2015}. Subsequent work on hardware-aware adaptive PHY~\cite{GastEtAlHardwareawareEnergyEfficiency2024} and cross-layer adaptive MIMO~\cite{KimEtAlCrosslayerApproachEnergy2009, ChaeEtAlAdaptiveMIMOTransmission2010} similarly adopted Shannon capacity in single-RU settings. Shannon capacity provides an idealized continuous-rate upper bound. Deployed 3GPP systems transmit over discrete MCS levels under a target BLER, so the achievable rate falls below this bound~\cite{FengEtAlSurveyEnergyefficientWireless2013a}. Unfortunately, an EE optimization built on Shannon capacity does not capture these MCS and BLER constraints. EE optimization for D-MIMO vRAN therefore needs a practical link abstraction that maps MCS and rank selections to throughput under a target BLER. Effective SNR mapping is the dominant such abstraction, with capacity, mutual-information, and exponential variants~\cite{HeathLozanoFoundationsMIMOCommunications2018, WanEtAlFadinginsensitivePerformanceMetric2006}. Among these, exponential effective SNR mapping (EESM) has been widely adopted for link adaptation and proven analytically tractable in multi-cell OFDMA settings~\cite{BlankenshipEtAlLinkErrorPrediction2004a, BrueninghausEtAlLinkPerformanceModels2005, FrancisMehtaEESMBasedLinkAdaptation2014}. EESM is itself a mutual-information approximation, but its exponential form admits a closed-form inverse. Other mutual-information-based and ML-based mappings improve prediction accuracy. Their compression functions, however, lack a closed-form inverse or rely on data-driven mappings, which precludes the tractable optimization that a closed-form inverse enables. EESM therefore offers a closed-form inverse that remains underexploited for cross-layer EE optimization.

In addition to adapting the mode of operation, the hardware can also enter sleep mode. Prior work on sleep mode strategies adapts RU activation to mobile traffic variation. Surveys~\cite{WuEtAlEnergyEfficientBaseStationsSleepMode2015} and analyses~\cite{FengEtAlBaseStationONOFF2017} established base station on-off switching as viable. The same idea extends to access-point sleep mode methods for cell-free massive MIMO under non-uniform traffic~\cite{Garcia-MoralesEtAlEnergyEfficientAccessPointSleepMode2020}. Recent Open-RAN-native approaches use deep-reinforcement-learning-based xApps on the RAN intelligent controller (RIC) for dynamic RU activation~\cite{BordinEtAlDesignEvaluationDeep2025}. Unfortunately, these methods lack the interpretability of closed-form solutions and do not jointly optimize the operation mode with PHY-layer parameters such as MCS, rank, and power.

\subsection{Contributions}

In this paper, we develop a cross-layer EE optimization framework for D-MIMO vRAN. First, we build a power model that resolves vDU baseband, fronthaul transport, and per-RU power. We then exploit the exponential structure of EESM to reformulate EE maximization as convex power minimization, which yields a waterfilling-like closed-form allocation. Finally, we extend this closed-form structure to a traffic-aware strategy that switches each RU among MIMO, SIMO, and sleep modes. The main contributions are as follows.

\begin{itemize}

\item We develop a D-MIMO vRAN power model that captures vDU processing, fronthaul, and distributed RU operations under dynamic traffic. Unlike our prior cross-layer framework~\cite{KimEtAlCrosslayerApproachEnergy2009}, which neglects baseband and fronthaul overhead, the model resolves all three components jointly.

\item We reformulate EE maximization as a convex power minimization problem under 3GPP-compliant transport block size (TBS). We solve the formulation by jointly optimizing MCS, rank, physical resource block (PRB) allocation, and power. The exponential structure of EESM yields closed-form Karush-Kuhn-Tucker (KKT) conditions and thus waterfilling-like solutions. An offline additive white Gaussian noise (AWGN)-calibrated BLER curve removes per-configuration link-level simulation from the optimization loop. We validate that a single calibrated $\beta_m$ per MCS matches the actual BLER over TDL-C channels.

\item We propose a traffic-aware RU mode selection strategy that dynamically switches each RU between MIMO, reduced-rank SIMO, and sleep mode based on traffic demand. Unlike prior sleep mode approaches~\cite{FengEtAlBaseStationONOFF2017, Garcia-MoralesEtAlEnergyEfficientAccessPointSleepMode2020}, the strategy jointly optimizes the operation mode and PHY parameters. To evaluate this strategy under realistic load, we build a spatio-temporal traffic model that combines multi-agent transport simulator (MATSim) user mobility with an empirical APN-C demand profile, since no standardized model captures joint spatial and temporal variation. The strategy achieves a 32\% average EE gain over an always-on MIMO baseline while maintaining quality-of-service.

\end{itemize}

A preliminary version of the traffic-aware mode selection appeared in~\cite{ParkEtAlSwitchDFTAdaptiveWaveform2026}.  We extend this preliminary version by incorporating vDU and fronthaul power, deriving closed-form EESM-based power allocation, and evaluating performance under realistic 3D spatio-temporal traffic.

This paper is organized as follows. In \secref{sec:system_model}, we describe the D-MIMO vRAN architecture and develop an OFDM signal model that captures distributed RU operations. In \secref{sec:power_model}, we build the power model that resolves vDU baseband, fronthaul transport, and per-RU power. In \secref{sec:EE_DMIMO}, we develop the cross-layer EE optimization framework under fixed traffic load and derive the closed-form power allocation. In \secref{sec:traffic-aware-d-mimo}, we extend the framework to spatio-temporal traffic dynamics and propose a traffic-aware RU mode selection strategy. In \secref{sec:results}, we validate the EESM calibration and present EE results over 3GPP NR-compliant channels and MATSim-generated traffic. Lastly, we conclude in \secref{sec:conclusion}.

\textit{Notation}: We denote a bold lowercase $\ba$ as a column vector and a bold uppercase $\bA$ as a matrix. We use $\bA^H$ and $\bA^T$ for the conjugate transpose and transpose of $\bA$. We use $|\cdot|$ to denote the absolute value of a scalar or the cardinality of a set, $\norm{\cdot}_F$ the Frobenius norm, and $\bbE[\cdot]$ the expectation. We use $\mathbf{1}$ and $\mathbf{0}$ to denote the all-ones and all-zeros column vectors, and $\bI$ the identity matrix, all of appropriate dimension. The set of real numbers is $\bbR$ and complex numbers is $\bbC$.


\section{System model}
\label{sec:system_model}

We lay out a single-user D-MIMO vRAN downlink signal model with OFDM transmission. We focus on the downlink because rate adaptation and optimization are more challenging there, and power consumption is generally higher on the downlink than on the uplink. We discuss extensions to multi-user settings in \secref{sec:conclusion}. As shown in~\figref{fig:vRAN_dMIMO}, the vDU implements baseband processing on general-purpose servers~\cite{ParkEtAlAcceleratingVRANORAN2026}. The vDU connects to distributed RUs via an Ethernet switch over dedicated fiber optic links, each supporting enhanced Common Public Radio Interface (eCPRI)-based fronthaul transport. Each RU houses the RF chains and power amplifiers that drive its transmit antennas. This disaggregation separates the system into a vDU, fronthaul links, and distributed RUs. We first explain the system dimensionality. We then develop the distributed MIMO channel model and the received signal model, and we make several assumptions to use them. Our goals are to compute spectral efficiency, to develop an algorithm that approaches that spectral efficiency, and then to compute energy efficiency for this system. To support these computations, we further specify the 3GPP NR frame structure and TBS, and conclude with the fronthaul rate requirements.

\begin{figure}[t]
    \centering
    \includegraphics[width=0.45\textwidth]{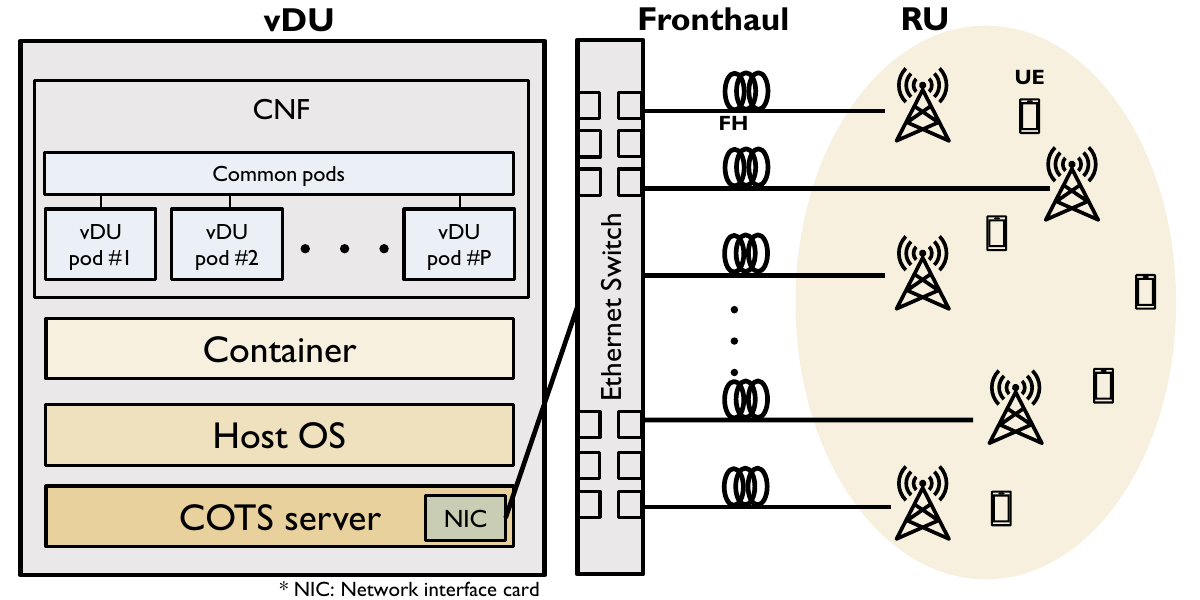}
    \caption{D-MIMO vRAN architecture. The vDU runs containerized network functions (CNFs) on a general-purpose server~\cite{ParkEtAlAcceleratingVRANORAN2026}. CNFs are software modules that share server hardware and scale resources on demand. The vDU connects to distributed RUs via fronthaul optical links using eCPRI under Open-RAN functional split 7-2x. This disaggregation introduces three power components, namely vDU baseband processing, fronthaul transport, and per-RU power.}
    \label{fig:vRAN_dMIMO}
\end{figure}

\subsection{System dimensionality}
\label{subsec:dimensionality}
A centralized vDU coordinates $L$ distributed RUs connected via fronthaul links. The $\ell$-th RU is equipped with $\Nt$ transmit antennas, and the UE is equipped with $\Nr$ receive antennas, yielding an $L\Nt \times \Nr$ MIMO configuration. Throughout this paper, index $k \in \{0, 1, \ldots,K-1\}$ refers to a specific subcarrier, $t$ to a time slot, and $\ell \in \{1,2,\ldots,L\}$ to a specific RU. The system supports up to $\Ns = \min(\Nr, L\Nt)$ spatial streams. The actual transmission rank $r \in \{1, \ldots, \Ns\}$ is selected adaptively per slot.

\begin{figure*}[!t]
  \centering
  \subfloat[]{\includegraphics[width=0.6\textwidth]{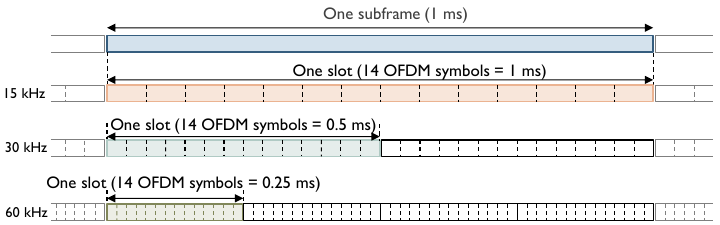} \label{fig:slot}} 
  \hfill
  \subfloat[]{\includegraphics[width=0.38\textwidth]{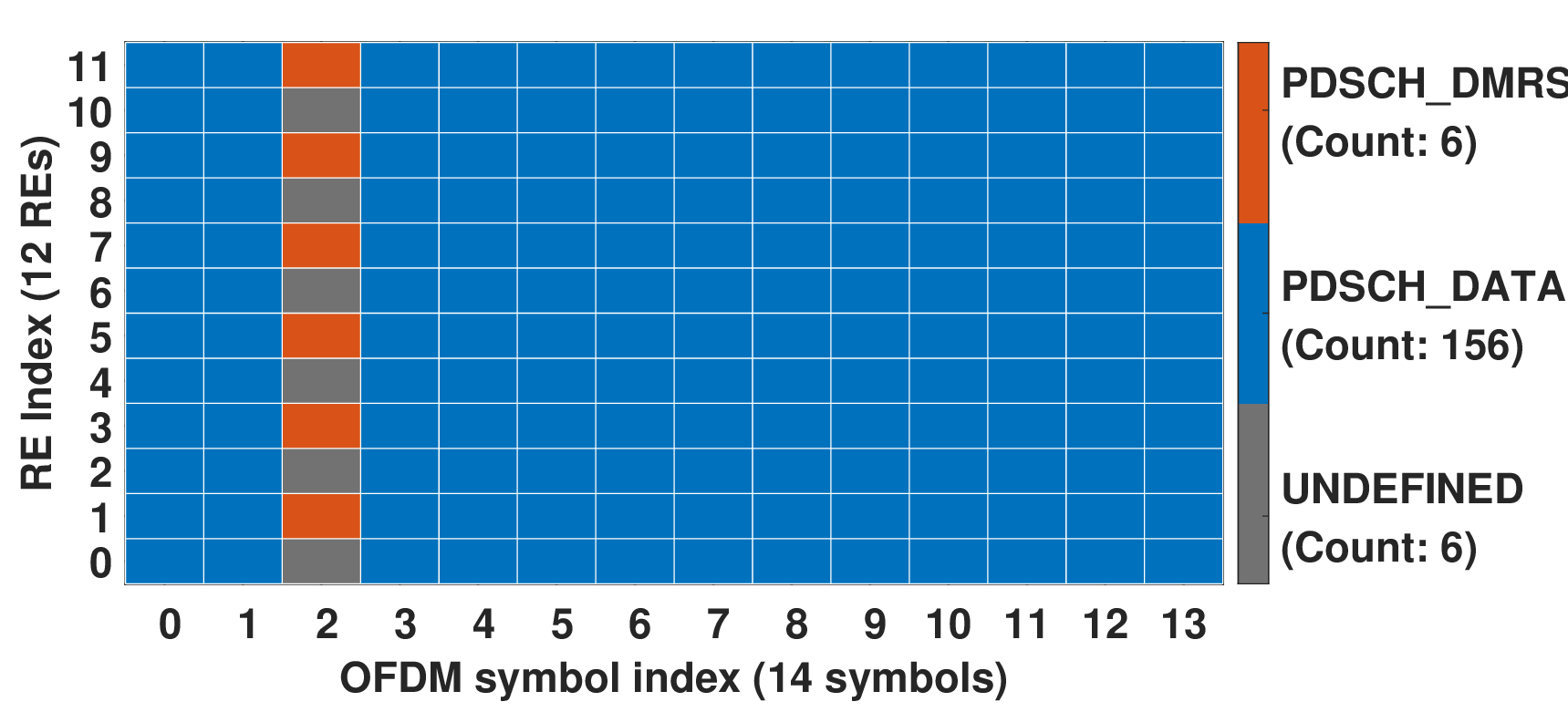} \label{fig:RB}} 
  \caption{3GPP NR physical layer resource structure. (a) Slot duration versus subcarrier spacing for $\mu \in \{0, 1, 2\}$. (b) Resource grid for one PRB (single layer) showing PDSCH data and DMRS symbols. DMRS occupies 6 of 168 REs, leaving 156 data REs.}
  \label{fig:NR_structure}
\end{figure*}

\subsection{Distributed MIMO channel model}
\label{subsec:channel}

Let $G_{\ell}$ denote the large-scale fading coefficient that accounts for path loss~\cite{HeathLozanoFoundationsMIMOCommunications2018}, and let $\tilde{\bH}_{\ell}[k] \in \bbC^{\Nr \times \Nt}$ denote the small-scale fading matrix normalized such that $\bbE[\|\bH_{\ell}[k]\|_F^2] = G_{\ell} \Nr \Nt$. The frequency-domain channel matrix at subcarrier $k$ between RU $\ell$ and the UE is
\begin{equation}
\bH_{\ell}[k] = \sqrt{G_{\ell}}\, \tilde{\bH}_{\ell}[k].
\label{eq:complete_channel}
\end{equation}
This decomposition separates the per-RU path loss $G_\ell$, set by the RU-UE distance, from the small-scale fading $\tilde{\bH}_\ell[k]$. The distinct $G_\ell$ per RU reflects the different geographic locations of the RUs~\cite{3gppTR38.901}.

\subsection{Received signal model}
\label{subsec:signal}

The vDU computes a precoding matrix $\bF_{\ell}[k] \in \bbC^{\Nt \times r}$ for each RU. The data symbol vector $\bs[k] \in \bbC^{r \times 1}$ satisfies $\bbE[\bs[k]\bs[k]^H] = \bI$. The transmit power at RU $\ell$ is $P_{\ell}[k] \geq 0$. The transmitted signal from RU $\ell$ is $\bx_{\ell}[k] \in \bbC^{\Nt \times 1}$, given by
\begin{equation}
    \bx_{\ell}[k] = \sqrt{P_{\ell}[k]}\,\bF_{\ell}[k]\,\bs[k].
    \label{eq:transmitted_signal_per_ru}
\end{equation}
Let $\bv[k] \sim \cN_{\bbC}(\b0,\, N_0\Delta f\,\bI_{\Nr})$ denote independent complex Gaussian noise across subcarriers, where $N_0$ is the noise power spectral density and $N_0\Delta f$ is the noise power per subcarrier. The UE receives the aggregate signal from $L_\text{act}$ active RUs as
\begin{align}
    \by[k] &= \sum_{\ell=1}^{L_\text{act}} \bH_{\ell}[k]\, \bx_{\ell}[k] + \bv[k] \nonumber \\
          &= \sum_{\ell=1}^{L_\text{act}} \sqrt{P_{\ell}[k]} \bH_{\ell}[k] \,\bF_{\ell}[k]\,\bs[k] + \bv[k].
\end{align}
The single-user model carries no inter-user interference, since the vDU serves one UE over the allocated resources. The RUs also operate cell-free under a common physical cell identity, so the UE perceives a single-cell downlink. The signal model therefore neglects out-of-system interference and represents the link as noise-limited. The vDU jointly designs the precoders $\bF_\ell[k]$ across the distributed RUs, and the UE combines the received signal, so the two operations together diagonalize the aggregate channel into parallel interference-free streams.

The vDU implements centralized joint processing across distributed RUs~\cite{BjornsonSanguinettiMakingCellFreeMassive2020}, estimating the concatenated channel matrix $\bH[k] = [\bH_1[k]\;\bH_2[k]\;\cdots\;\bH_L[k]] \in \bbC^{\Nr \times L\Nt}$ from uplink reference signals. We assume perfect CSI at the vDU via time-division duplexing (TDD) reciprocity and perfect synchronization across the distributed RUs. Let $\bU[k] \in \bbC^{\Nr \times \Nr}$ and $\bV[k] \in \bbC^{L\Nt \times L\Nt}$ be unitary matrices, and let $\mathbf{\Sigma}[k] \in \bbR^{\Nr \times L\Nt}$ contain the ordered singular values $\sigma_1[k] \geq \sigma_2[k] \geq \cdots \geq \sigma_{\Ns}[k] \geq 0$. From the singular value decomposition (SVD) of $\bH[k] = \bU[k]\,\mathbf{\Sigma}[k]\,\bV[k]^H$, the vDU builds a joint precoder from the $r$ dominant right singular vectors, extracting $\bF_\ell[k] = \bV[k]_{(\ell-1)\Nt+1:\ell\Nt,\,1:r} \in \bbC^{\Nt \times r}$ for RU~$\ell$. The vDU distributes $\bF_\ell[k]$ to each RU via the fronthaul. Since $\bV[k]$ is unitary, the precoder blocks satisfy $\sum_{\ell=1}^{L} \norm{\bF_\ell[k]}_F^2 = r$. The precoder thus fixes only the spatial mapping, while each RU sets its transmit power independently through $P_\ell[k]$ under its power amplifier budget.

Let $\bV_r[k]$ and $\bU_r[k]$ collect the first $r$ columns of $\bV[k]$ and $\bU[k]$. The stacked precoder $\bF[k] = [\bF_1[k]^T \cdots \bF_L[k]^T]^T$ equals $\bV_r[k]$, so the effective channel reduces to
\begin{equation}
    \bH[k]\,\bF[k] = \bU[k]\,\mathbf{\Sigma}[k]\,\bV[k]^H \bV_r[k] = \bU_r[k]\,\mathbf{\Sigma}_r[k],
\label{eq:effective_channel}
\end{equation}
where $\mathbf{\Sigma}_r[k] = \diag(\sigma_1[k], \ldots, \sigma_r[k])$ collects the $r$ largest singular values. The UE combines the received signal with $\bU_r[k]^H$. The vDU assigns a common per-subcarrier power $P[k]$ shared equally across the $r$ streams. Using \eqref{eq:effective_channel} and $\bU_r[k]^H\bU_r[k] = \bI_r$, the combined signal becomes
\begin{equation}
    \bU_r[k]^H\by[k] = \sqrt{\tfrac{P[k]}{r}}\,\mathbf{\Sigma}_r[k]\,\bs[k] + \bU_r[k]^H\bv[k].
\label{eq:combined_signal}
\end{equation}
Since $\bU[k]$ is unitary, the noise $\bU_r[k]^H\bv[k]$ keeps the per-subcarrier power $N_0\Delta f$. The per-stream received SNR is
\begin{equation}
    \gamma_q[k] = \frac{\sigma_q^2[k]\,P[k]}{r\,N_0\Delta f}, \quad q = 1, \ldots, r.
\label{eq:per_stream_snr}
\end{equation}
The diagonalization follows from SVD precoding $\bV_r[k]$ at the transmitter and combining $\bU_r[k]^H$ at the receiver, as shown in \eqref{eq:effective_channel} and \eqref{eq:combined_signal}. The combined operation removes inter-stream interference, so each stream sees an independent channel with SNR $\gamma_q[k]$.

\subsection{3GPP NR frame structure and MAC/PHY procedures}

We develop link adaptation strategies on a 5G NR frame structure, so we first introduce the key concepts, terminology, and parameters that the optimization in Sections~\ref{sec:EE_DMIMO} and~\ref{sec:traffic-aware-d-mimo} uses. The medium access control (MAC) scheduler selects MCS, PRB allocation, and rank based on UE channel feedback, which together determine the TBS. The PHY layer then maps the transport block onto the OFDM resource grid defined by the NR frame structure. The subcarrier spacing $\Delta f$, slot duration $T_{\text{slot}}$, cyclic prefix samples $\Ncp$, data resource elements $\NRE$, and MCS parameters $Q_m$ and $R_m$ appear directly in the effective SNR mapping, throughput, and TBS expressions. We adopt the 3GPP NR frame structure with flexible numerology $\mu$, which sets the subcarrier spacing $\Delta f = 2^{\mu} \times 15$ kHz and the slot duration $T_{\text{slot}} = 2^{-\mu}$ ms. Each OFDM symbol carries $\Ncp$ cyclic prefix samples. Figure~\ref{fig:NR_structure}(a) illustrates the resulting time-frequency grid.

The TBS calculation follows 3GPP TS 38.214~\cite{3gppTS38.214}. We denote $\NRB$ as the maximum number of PRBs in the system bandwidth and $n$ as the number allocated for transmission, where $n \in \{1, \ldots, \NRB\}$. Let $N_{\text{symb}}$ denote the OFDM symbols per slot allocated to the physical downlink shared channel (PDSCH), $N_{\text{DMRS}}^{\text{PRB}}$ the demodulation reference signal (DMRS) overhead per PRB, and $N_{\text{oh}}^{\text{PRB}}$ additional overhead from phase-tracking reference signals or reserved resources. The total number of data REs per PRB is $\NRE^{\text{PRB}}$, and the total data REs for PDSCH transmission is
\begin{equation}
    \NRE = n\NRE^{\text{PRB}} =  n \left( 12  N_{\text{symb}} - N_{\text{DMRS}}^{\text{PRB}}  - N_{\text{oh}}^{\text{PRB}}\right).
\label{eq:n_re}
\end{equation}
\tabref{tab:mcs} lists the MCS parameters for PDSCH used in this work. Four representative MCS indices spanning quadrature phase-shift keying (QPSK, $m=0$) to 256-quadrature amplitude modulation (256-QAM, $m=3$) are selected to cover the full modulation range. We denote $Q_m$ as the modulation order and $R_m$ as the code rate for MCS index $m$. The total number of information bits is
\begin{equation}
\Ninfo = \NRE \cdot R_m \cdot Q_m \cdot r.
\label{eq:ninfo}
\end{equation}
For analytical tractability, we omit the 3GPP TBS quantization for low-density parity-check (LDPC) compatibility and treat $\Ninfo$ directly as the transport block size $\NTBS$, flooring to the nearest bit. For numerical evaluation, we use $\NRE^{\text{PRB}} = 156$ REs per PRB based on the DMRS configuration shown in Fig.~\ref{fig:NR_structure}(b). The structure of these expressions does not depend on the specific 3GPP NR values. It remains valid for any standard that provides a finite set of MCS, PRB, and rank options, so the framework extends to 6G and beyond. We use the 5G NR values as a concrete instance in this work and specify them for our evaluation.

\begin{table}[t]
\caption{MCS parameters for PDSCH under 3GPP NR~\cite{3gppTS38.214}\label{tab:mcs}}
\begin{center}
    {\renewcommand{\arraystretch}{1.1}%
    \small
        \begin{tabular}{c|c|c|c}
        \hline\hline
        $m$ & $Q_m$ & $R_m$ & Spectral efficiency (bits/RE) \\
        \hline
        0 & 2 & 602/1024 & 1.1758 \\
        1 & 4 & 658/1024 & 2.5703 \\
        2 & 6 & 873/1024 & 5.1152 \\
        3 & 8 & 948/1024 & 7.4063 \\
        \hline\hline
        \end{tabular}
    } 
\end{center}
\end{table}

\subsection{Fronthaul rate requirements}
\label{subsec:fronthaul}

The fronthaul transport consumes power that scales with the number of optical links a RU activates, so we compute the fronthaul rate here to determine that link count for the power model in \secref{sec:power_model}. The vDU transmits frequency-domain symbols to each RU via the eCPRI fronthaul link under Open RAN functional split 7-2x, as illustrated in~\figref{fig:ORAN}. Let $R_{\text{FH},\ell}$ denote the fronthaul data rate for RU $\ell$ and $N_{\text{bits}}^{\text{IQ}}$ the number of bits per in-phase and quadrature (IQ) sample after compression. Under joint coherent transmission, each active RU shares the same PRB allocation $n$. The overhead factor $\alpha_{\text{OH}}$ accounts for eCPRI protocol overhead, Ethernet framing, and cyclic redundancy check. We compute the downlink fronthaul rate for RU $\ell$ as
\begin{equation}
    R_{\text{FH},\ell} = \frac{ n \NRE^{\text{PRB}}  N_{\text{bits}}^{\text{IQ}} r}{T_{\text{slot}}} (1 + \alpha_{\text{OH}}).
\label{eq:fronthaul_rate}
\end{equation}
To reflect commercial Open RAN deployments, we assume optical links of capacity $C_{\text{link}} = 25$~Gbps using SFP28 transceivers. The vDU allocates an integer number of physical links to carry the fronthaul rate $R_{\text{FH},\ell}$, so RU $\ell$ uses
\begin{equation}
N_{\text{link},\ell} = \left\lceil \frac{R_{\text{FH},\ell}}{C_{\text{link}}} \right\rceil
\label{eq:n_link}
\end{equation}
optical links, giving a total capacity $C_{\text{FH},\ell} = N_{\text{link},\ell} C_{\text{link}}$. By construction $C_{\text{FH},\ell} \geq R_{\text{FH},\ell}$, so the fronthaul capacity constraint holds. Each link draws a fixed port power, so this link count sets the per-RU fronthaul power in \secref{sec:power_model}.

\begin{figure*}[t]
    \centering
    \includegraphics[width=0.90\textwidth]{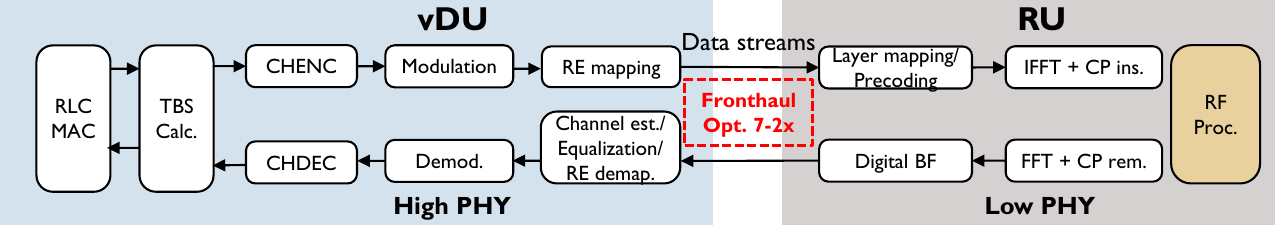}
    \caption{Open RAN functional split 7-2x. The vDU performs high PHY processing (channel encoding, modulation, precoding) and transmits frequency-domain symbols via eCPRI. The RU performs low PHY processing (IFFT, cyclic prefix insertion) before RF transmission. This split reduces fronthaul bandwidth by roughly an order of magnitude over CPRI.}
    \label{fig:ORAN}
\end{figure*}

\section{Power consumption model for distributed MIMO vRAN}
\label{sec:power_model}

We decompose the D-MIMO vRAN power into the three components introduced in \secref{sec:system_model}: vDU baseband processing, fronthaul transport, and per-RU power. This power model captures how the MCS, PRB allocation, rank, and transmit power drive each component, which the EE optimization in \secref{sec:EE_DMIMO} then minimizes.

We decompose the total power into vDU baseband processing on general-purpose servers, fronthaul optical transport, and RU RF chain and power amplifier (PA) operations as
\begin{equation}
    P_{\text{total}} = P_{\text{vDU}} + P_{\text{FH}} + P_{\text{RU}}.
\label{eq:power_total}
\end{equation}
The MCS index $m$, PRB allocation $n$, and rank $r$ take values in the feasible sets $\cM = \{0,1,2,3\}$, $\cN = \{1,\ldots,\NRB\}$, and $\cR = \{1,\ldots,\Ns\}$. 

We model the vDU power consumption as static baseline power and dynamic processing power. The baseline power $P_{\text{vDU},0}$ accounts for the host operating system, hypervisor, and idle hardware components. The dynamic power scales linearly with $n$ and $r$, since baseband processing complexity grows with FFT operations, resource element mapping, and channel coding over the allocated bandwidth. We denote $\Delta P_{\text{vDU}}$ as the power range between idle and fully loaded server operation, and $\alpha_{\text{CPU}} = 1 / (\NRB \cdot \Ns)$ as the normalized utilization coefficient per PRB per stream, so that $\alpha_{\text{CPU}} \cdot n \cdot r \in [0, 1]$ represents the fractional CPU load. The vDU power is
\begin{equation}
    P_{\text{vDU}}(n, r) = P_{\text{vDU},0} + \Delta P_{\text{vDU}} \cdot \alpha_{\text{CPU}} \cdot n \cdot r.
\label{eq:power_vdu}
\end{equation}

The fronthaul comprises an Ethernet switch and active optical links. The Ethernet switch connects the vDU to all RU fronthaul links, with power consumption split between a fixed chassis component and a per-port component that scales with the number of active links. We denote $P_{\text{switch},0}$ as the switch chassis static power, which accounts for the control plane, switching fabric, and cooling. Each active optical link consumes power $P_{\text{port}}$ for the Ethernet switch port and SFP28 transceiver pair. Each RU $\ell$ requires $N_{\text{link},\ell}(n, r)$ optical links from \eqref{eq:n_link}, which scales with $n$ and $r$ through the fronthaul rate $R_{\text{FH},\ell}$ in \eqref{eq:fronthaul_rate}. The total fronthaul power is
\begin{equation}
    P_{\text{FH}}(n, r) = P_{\text{switch},0} + \sum_{\ell=1}^{L} N_{\text{link},\ell}(n, r) \cdot P_{\text{port}}.
\label{eq:power_fh}
\end{equation}
The ceiling in $N_{\text{link},\ell}$ produces a staircase dependence on $R_{\text{FH},\ell}$, so the per-port contribution scales with each RU's link requirement independently while the chassis power $P_{\text{switch},0}$ remains constant.

Each RU deploys $\Nt$ transmit antennas, each with a dedicated RF chain. In MIMO mode all $\Nt$ chains are active, while in SIMO mode only one chain operates. Let $P_{\text{LO}}$ denote the local oscillator power shared across all chains. The static RU power per RU is
\begin{equation}
P_{\text{RU},0} = P_{\text{LO}} + \Nt \cdot P_{\text{RF}}.
\label{eq:p_ru0}
\end{equation}
We denote $P_{\ell}[k]$ as the radiated transmit power of RU $\ell$ at subcarrier $k$. We assume a uniform PA efficiency $\eta_{\text{PA}} \in (0,1]$ across all RUs, so a radiated power $P_{\text{tx},\ell}$ requires DC power $P_{\text{tx},\ell}/\eta_{\text{PA}}$. We also assume sufficient power-amplifier back-off for linear operation under the OFDM peak-to-average power ratio, and $\eta_{\text{PA}}$ captures the back-off efficiency. The total RU power across all $L$ RUs is
\begin{equation}
P_{\text{RU}} = \sum_{\ell=1}^{L} \left( P_{\text{RU},0} + \frac{1}{\eta_{\text{PA}}} P_{\text{tx},\ell}\right) = L \cdot P_{\text{RU},0} + \frac{1}{\eta_{\text{PA}}} \sum_{\ell=1}^{L} P_{\text{tx},\ell}.
\label{eq:power_ru}
\end{equation}

Each RU has a dedicated PA with its own power budget $P_{\max}$, so the per-subcarrier power $P_{\ell}[k]$ is an independent optimization variable for each RU $\ell$. We define the per-RU power vector $\bp_\ell = [P_\ell[0], \ldots, P_\ell[K-1]]^T \in \bbR_+^K$, so that $P_{\text{tx},\ell} = \mathbf{1}^T \bp_\ell$. The transmit power constraint applies to each RU individually rather than to the sum across RUs,
\begin{equation}
\mathbf{1}^T \bp_\ell \leq P_{\max}, \quad \forall \ell \in \{1, \ldots, L\}.
\label{eq:per_ru_constraint}
\end{equation}
The constraint in \eqref{eq:per_ru_constraint} bounds the total transmit power of each RU. Substituting \eqref{eq:power_vdu}, \eqref{eq:power_fh}, and \eqref{eq:power_ru} into \eqref{eq:power_total} with $P_{\text{tx},\ell} = \mathbf{1}^T \bp_\ell$, the total power decomposes as
\begin{equation}
    P_{\text{total}}\!\left(n, r, \{\bp_\ell\}\right) = P_{\text{static}}(n, r) + \frac{1}{\eta_{\text{PA}}} \sum_{\ell=1}^{L} \mathbf{1}^T \mathbf{p}_\ell.
\label{eq:power_decompose}
\end{equation}
The static power is
\begin{equation}
\begin{aligned}
    &P_{\text{static}}(n, r) = \underbrace{P_{\text{vDU},0} + \Delta P_{\text{vDU}} \cdot \alpha_{\text{CPU}} \cdot n \cdot r}_{P_{\text{vDU}}(n,r)} \\
    &+ \underbrace{P_{\text{switch},0} + \sum_{\ell=1}^{L} N_{\text{link},\ell}(n, r) \cdot P_{\text{port}}}_{P_{\text{FH}}(n, r)} + \underbrace{L \cdot \left( P_{\text{LO}} + \Nt \cdot P_{\text{RF}} \right)}_{L \cdot P_{\text{RU},0}}.
\end{aligned}
\label{eq:power_static}
\end{equation}
All three components of \eqref{eq:power_static} correspond to the vDU, fronthaul, and aggregated RU contributions, independent of the per-RU power vectors.

\section{Energy efficiency optimization for distributed MIMO vRAN}
\label{sec:EE_DMIMO}

We develop a cross-layer EE optimization framework for D-MIMO vRAN under a fixed traffic load. Our approach couples MAC-layer TBS calculation with PHY-layer power allocation through the power model of \secref{sec:power_model}. We first build the EESM-based link abstraction in \secref{subsec:eesm} and \secref{subsec:calibration}. We then formulate EE maximization as a convex problem and derive the closed-form power allocation in \secref{subsec:ee_problem} and \secref{subsec:optimal_solution}.

\subsection{Link abstraction via effective SNR mapping}
\label{subsec:eesm}

We develop an approach for selecting the communication parameters that maximize the rate in the system. This selection requires a link abstraction that predicts BLER for each candidate MCS and rank without per-realization link-level simulation. A machine learning model offers one such abstraction~\cite{DanielsHeathOnlineAdaptiveModulation2010}. Well-known effective SNR mapping functions offer another, with capacity, mutual-information, and exponential variants~\cite{HeathLozanoFoundationsMIMOCommunications2018}. We adopt EESM because its exponential form admits a convex reformulation and a closed-form power allocation that the other variants do not.

Effective SNR mapping compresses the per-subcarrier SNR vector $(\gamma[0], \ldots, \gamma[K-1])$ into a scalar equivalent SNR. This mapping applies a monotone function to each per-subcarrier SNR and inverts the average to recover the scalar equivalent SNR. The mapped SNR predicts BLER by matching the frequency selective channel to an AWGN channel at that SNR, avoiding per-realization link-level simulation. We denote $\gamma[k]$ as the single-stream SNR at subcarrier $k$. EESM uses an exponential mapping calibrated through a parameter $\beta > 0$~\cite{BlankenshipEtAlLinkErrorPrediction2004a}. The EESM effective SNR is
\begin{equation}
    \snreesm \approx -\beta \ln\!\left( \frac{1}{K+\Ncp} \sum_{k=0}^{K-1} \exp\!\left(-\frac{\gamma[k]}{\beta}\right) \right).
\label{eq:esm_general}
\end{equation}
The factor $1/(K+\Ncp)$ accounts for the cyclic prefix overhead in OFDM symbol duration. Capacity-effective SNR mapping (CESM) and mutual-information effective SNR mapping (MIESM), by contrast, use compression functions whose inverses lack closed-form expressions. The resulting SNR constraint is non-convex, precluding the closed-form KKT-based power allocation that the EESM constraint enables in \secref{subsec:ee_problem}.

We substitute the per-stream SNR in \eqref{eq:per_stream_snr} into \eqref{eq:esm_general} and aggregate over all $r$ active streams. With $a_{q,k} = \sigma_q^2[k]/(\beta r N_0 \Delta f)$, the EESM effective SNR as a function of $\bp \in \bbR_+^K$ is
\begin{equation}
\text{SNR}_{\text{EESM}}(\bp) = -\beta \ln\!\left( \frac{1}{r(K+\Ncp)} \sum_{q=1}^{r}\sum_{k=0}^{K-1} e^{-a_{q,k} P[k]} \right).
\label{eq:eesm}
\end{equation}
The aggregation over all $r$ streams follows the 3GPP NR layer-to-codeword mapping rule. 3GPP NR maps a single transport block onto up to four spatial layers via a single LDPC codeword~\cite{3gppTS38.211}, which covers all configurations in this work. The modulated symbols of one codeword are layer-mapped across all active streams, and decoding succeeds or fails jointly for the entire codeword. The joint $\text{SNR}_{\text{EESM}}$ in \eqref{eq:eesm} therefore matches the standard-compliant decoding behavior under single-codeword mapping.

From \eqref{eq:ninfo}, we redefine the TBS for MCS index $m$, $n$ allocated PRBs, and $r$ spatial streams as
\begin{equation}
    \NTBS (m, n, r) = N_{\text{RE}}^{\text{PRB}} R_m Q_m n r,
\label{eq:tbs}
\end{equation}
which represents the number of information bits per slot. Let $p_e(\cdot)$ denote the AWGN BLER curve for MCS index $m$ obtained from offline link-level simulation. The effective throughput is
\begin{equation}
R_{\text{eff}}(m, n, r) = \frac{\left(1 - p_e\!\left(\text{SNR}_{\text{EESM}}\right)\right) \cdot \NTBS(m, n, r)}{T_{\text{slot}}}.
\label{eq:throughput_eff}
\end{equation} We adopt a step-function approximation of the BLER curve for tractable optimization. Let $p_e^{\text{target}}$ denote the target BLER that defines successful decoding, and let $\gamma_{\text{th}}^{(m)}$ denote the SNR at which $p_e(\gamma_{\text{th}}^{(m)}) = p_e^{\text{target}}$. The step-function approximation~is
\begin{equation}
    p_e(\text{SNR}) \approx \begin{cases}
    1, & \text{if } \text{SNR} < \gamma_{\text{th}}^{(m)}, \\
    0, & \text{if } \text{SNR} \geq \gamma_{\text{th}}^{(m)}.
\end{cases}
\label{eq:bler_step}
\end{equation}
The effective throughput equals $\NTBS(m, n, r) / T_{\text{slot}}$ whenever $\text{SNR}_{\text{EESM}}(\bp) \geq \gamma_{\text{th}}^{(m)}$ and zero otherwise. We transform the BLER constraint into a deterministic feasibility condition for convex optimization in \secref{subsec:ee_problem}.

\subsection{Offline EESM calibration}
\label{subsec:calibration}

Offline EESM calibration is a standard procedure in system-level evaluation, where a per-MCS $\beta$ is fit once against link-level BLER curves and then reused across deployments. Industry simulation libraries also adopt the same method, with NVIDIA Sionna drawing its EESM coefficients directly from~\cite{LagenEtAlNewRadioPhysical2020, sionna}. Rather than reusing these published coefficients, we fit $\beta$ in \eqref{eq:eesm} for each MCS index $m$ on our own 3GPP NR PDSCH chain via \algref{alg:beta_calibration}, executed in the link-level simulation prior to deployment under single-stream transmission. The calibrated $\beta_m$ ensures that the EESM-predicted BLER matches the actual BLER across all TDL channel realizations and operating SNR points. The single-stream calibration directly supports the closed-form power allocation for $r=1$ in Theorem~\ref{thm:optimal_power}. Multi-stream extension may benefit from per-rank $\beta_m^{(r)}$ calibration through repeated application of \algref{alg:beta_calibration} with the target $r$ in \eqref{eq:eesm}.

The calibrated parameter $\beta_m$ is obtained by averaging over all realizations and SNR points,
\begin{equation}\beta_m = \frac{1}{N_{\text{SNR}} N_{\text{cal}}} \sum_{i=1}^{N_{\text{SNR}}} \sum_{j=1}^{N_{\text{cal}}} \beta^{(i,j)}.
\label{eq:beta_opt}
\end{equation}
The resulting scalar $\beta_m$ is stored at the vDU for use during runtime MCS selection. At runtime, the vDU retrieves $\beta_m$ by interpolating $\{\beta_m^{(i)}\}$ at the SNR estimated from UE channel quality indicator (CQI) feedback, and treats it as fixed throughout the optimization to preserve the convex structure.

\begin{algorithm}[t]
\caption{Offline EESM calibration for MCS index $m$}
\label{alg:beta_calibration}
\begin{algorithmic}[1]
\STATE \textbf{Input:} MCS index $m$, SNR grid $\{\gamma_{\text{SNR}}^{(i)}\}_{i=1}^{N_{\text{SNR}}}$, number of channel realizations per SNR $N_{\text{cal}}$
\STATE \textbf{Output:} Calibrated parameter $\beta_m$
\STATE Generate AWGN BLER reference curve $p_e(\gamma)$ for MCS $m$ by running link-level simulation
\FOR{$i = 1$ to $N_{\text{SNR}}$}
    \STATE Set operating SNR $\gamma_{\text{SNR}} \gets \gamma_{\text{SNR}}^{(i)}$
    \STATE Initialize $\beta_{\text{snr}} \gets \{\}$
    \FOR{$j = 1$ to $N_{\text{cal}}$}
        \STATE Generate TDL channel realization $\bH^{(j)}$
        \STATE Compute per-subcarrier SNRs $\{\gamma[k]^{(j)}\}_{k=0}^{K-1}$ at $\gamma_{\text{SNR}}$
                \STATE Run full link-level simulation to obtain $\text{BLER}^{(j)}$
        \IF{$\text{BLER}^{(j)} \in [10^{-3},\, 0.5]$}
            \STATE Find $\gamma_{\text{target}}^{(j)}$ such that $p_e(\gamma_{\text{target}}^{(j)}) = \text{BLER}^{(j)}$
            \STATE Numerically solve for $\beta^{(j)}$ such that $\gamma_{\text{EESM}}(\beta^{(j)}) = \gamma_{\text{target}}^{(j)}$ using \eqref{eq:eesm}
            \STATE $\beta_{\text{snr}} \gets \beta_{\text{snr}} \cup \{\beta^{(j)}\}$
        \ENDIF
    \ENDFOR
    \STATE $\beta_m^{(i)} \gets \frac{1}{|\beta_{\text{snr}}|} \sum_{\beta \in \beta_{\text{snr}}} \beta$
\ENDFOR
\STATE $\beta_m \gets \frac{1}{N_{\text{SNR}}} \sum_{i=1}^{N_{\text{SNR}}} \beta_m^{(i)}$
\RETURN $\beta_m$
\end{algorithmic}
\end{algorithm}

\subsection{Energy efficiency optimization problem}
\label{subsec:ee_problem}

In this section, we formulate energy efficiency optimization using the EESM prior work of \secref{subsec:eesm} and \secref{subsec:calibration}, which focuses only on rate. We therefore define a new metric that couples the predicted throughput to the power model of \secref{sec:power_model}. We pose this problem for a fixed configuration $(m, n, r) \in \cM \times \cN \times \cR$. 

We define the system-level energy efficiency metric as the ratio of effective throughput to total power consumption as
\begin{equation}
\eta_{\text{EE}}(n, r, \bp) = \frac{R_{\text{eff}}(m, n, r)}{P_{\text{total}}(n, r, \bp)}.
\label{eq:ee_metric}
\end{equation}
Here, $L_{\text{act}}$ denotes the number of active RUs sharing the common power vector $\bp$. We substitute the effective throughput from \eqref{eq:throughput_eff} and the total power decomposition from \eqref{eq:power_decompose} into \eqref{eq:ee_metric} to obtain 
\begin{equation}
        \eta_{\text{EE}}(n, r, \bp) = \frac{\NTBS(m, n, r)}{T_{\text{slot}}} \cdot \frac{1 - p_e(\text{SNR}_{\text{EESM}}(\bp))}{P_{\text{static}}(n, r) + \frac{L_{\text{act}}}{\eta_{\text{PA}}} \mathbf{1}^T \bp}.
\label{eq:ee_expanded}
\end{equation}
Under the step-function approximation in \eqref{eq:bler_step}, the effective throughput is $\NTBS(m, n, r) / T_{\text{slot}}$ or zero. For a fixed configuration $(m, n, r)$, the numerator $\NTBS(m,n,r)/T_{\text{slot}}$ is a constant, $P_{\text{static}}(n, r)$ is a fixed offset independent of $\bp$, and $L_{\text{act}}$ is constant within the configuration. Maximizing $\eta_{\text{EE}}$ therefore reduces to minimizing the per-RU transmit power $\mathbf{1}^T\bp$. Following the common per-subcarrier power assignment in \secref{subsec:eesm}, we drop the RU index $\ell$ since the same solution applies to all active RUs. We formulate the optimization problem as
\begin{equation} \label{eq:power_min}
    \min_{\bp} \quad \mathbf{1}^T \bp 
\end{equation}
\begin{subequations} \label{eq:consts}
    \begin{align}
        \text{subject to} \quad 
        & \text{SNR}_{\text{EESM}}(\bp) \geq \gamma_{\text{th}}^{(m)}, \label{eq:const_eesm}\\
        & \mathbf{1}^T \bp \leq P_{\max}, \label{eq:const_sumpower} \\
        & \bp \succeq \mathbf{0}, \label{eq:const_power}
    \end{align}
\end{subequations}
where $P_{\max}$ is the maximum per-RU transmit power budget and $\bp \succeq \mathbf{0}$ denotes element-wise non-negativity.

We transform the EESM constraint into a convex form. Substituting the multi-stream EESM expression \eqref{eq:eesm} into the constraint $\text{SNR}_{\text{EESM}}(\bp) \geq \gamma_{\text{th}}^{(m)}$ and rearranging gives
\begin{equation}
-\beta \ln\!\left( \frac{1}{r(K+\Ncp)} \sum_{q=1}^{r} \sum_{k=0}^{K-1} \exp(-a_{q,k} P[k]) \right) \geq \gamma_{\text{th}}^{(m)}.
\label{eq:eesm_constraint_raw}
\end{equation}
Multiplying both sides by $-1/\beta$ and applying the exponential function yields the equivalent constraint
\begin{equation}
\frac{1}{r(K+\Ncp)} \sum_{q=1}^{r} \sum_{k=0}^{K-1} \exp(-a_{q,k} P[k]) \leq \exp\!\left(-\frac{\gamma_{\text{th}}^{(m)}}{\beta}\right).
\label{eq:eesm_convex_form}
\end{equation}
Equation \eqref{eq:power_min} is convex since the objective is linear and the multi-stream EESM constraint \eqref{eq:eesm_convex_form} defines a convex set as a sublevel set of a nonnegative sum of exponential functions, each of which is convex in $P[k]$. The global optimum is therefore characterized by KKT conditions.

\subsection{Optimal power allocation}
\label{subsec:optimal_solution}

Maximizing EE reduces to meeting the EESM threshold with the least transmit power. Subcarriers differ in channel gain, so a uniform power level wastes power. We therefore allocate power per subcarrier by its channel gain, which meets the threshold at a lower total power and thus a higher EE. The vDU applies this allocation at each slot as a practical runtime rule, not only a performance bound. The optimal power allocation follows from the KKT conditions applied to \eqref{eq:power_min}. Recalling $a_{q,k}$ from \eqref{eq:eesm}, we further define $c = \exp(-\gamma_{\text{th}}^{(m)} / \beta)$ for compactness. The multipliers and the calibration parameter depend on the MCS index $m$.  We omit the $m$ subscript during the derivation for notational simplicity, restoring it in the final closed form. Let $\lambda \geq 0$ be the multipliers for the EESM constraint, $\nu \geq 0$ for the total power constraint, and $\boldsymbol{\xi} \geq \mathbf{0}$ for the non-negativity constraints. The Lagrangian is
\begin{equation}
\begin{aligned}
    \mathcal{L} &= \mathbf{1}^T \bp + \lambda \left( \frac{1}{r(K+\Ncp)} \sum_{q=1}^{r} \sum_{k=0}^{K-1} e^{-a_{q,k} P[k]} - c \right) \\
    &\quad + \nu \left( \mathbf{1}^T \bp - P_{\max} \right) - \boldsymbol{\xi}^T \bp.
\end{aligned}
\end{equation}
The KKT stationarity condition is
\begin{equation}
\frac{\partial \mathcal{L}}{\partial P[k]} = 1 + \nu - \frac{\lambda}{r(K+\Ncp)} \sum_{q=1}^{r} a_{q,k}\, e^{-a_{q,k} P[k]} - \xi_k = 0.
\label{eq:kkt_stationarity}
\end{equation}
For $r > 1$, the sum of exponentials in \eqref{eq:kkt_stationarity} yields a transcendental equation in $P^*[k]$ that must be solved numerically. For the single-stream case $r = 1$, however, the sum collapses to a single exponential and admits a closed-form solution, stated in Theorem~\ref{thm:optimal_power}.

\begin{theorem}
\label{thm:optimal_power}
Let $\cA = \{k : P^*[k] > 0\}$ denote the active set, and $\cI = \{0, \ldots, K{-}1\} \setminus \cA$ the inactive subcarrier set. For the single-stream case ($r = 1$) with feasible $(m,n,r)$, the optimal power allocation for $k \in \cA$ is
\begin{equation}
P^*[k] = \frac{\beta N_0 \Delta f}{\sigma_1^2[k]} \ln\!\left( \frac{\lambda \sigma_1^2[k]}{(K+\Ncp) \beta N_0 \Delta f} \right),
\label{eq:optimal_power}
\end{equation}
where $\lambda > 0$ satisfies the EESM constraint with equality over $\cA$. The closed form is
\begin{equation}
\lambda_m = \frac{\beta_m N_0 \Delta f}{\exp(-\gamma_{\text{th}}^{(m)} / \beta_m) - |\cI|/(K+\Ncp)} \sum_{k \in \cA} \frac{1}{\sigma_1^2[k]}.
\label{eq:lambda_closed}
\end{equation}
Subcarriers with $\sigma_1^2[k]$ below the threshold determined by $\lambda$ are inactive. See Appendix~\ref{app:proof_theorem1} for the active set determination.
\end{theorem}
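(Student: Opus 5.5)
The plan is to specialize the KKT system of \eqref{eq:power_min} to $r=1$ and solve it directly. Problem \eqref{eq:power_min} is convex, and Slater's condition holds whenever the configuration is strictly feasible, so the KKT conditions are necessary and sufficient.

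\textbf{Step 1: simplify the multipliers.} With $r=1$, write $a_k = a_{1,k} = \sigma_1^2[k]/(\beta N_0\Delta f)$.
\begin{itemize}
\item I will first argue that $\lambda>0$. If $\lambda=0$, stationarity \eqref{eq:kkt_stationarity} gives $\xi_k = 1+\nu > 0$ for all $k$. Complementary slackness then forces $\bp=\mathbf{0}$. At $\bp=\mathbf{0}$ the left side of \eqref{eq:eesm_convex_form} equals $K/(K+\Ncp)$, which exceeds $c=\exp(-\gamma_{\text{th}}^{(m)}/\beta)$ for any positive threshold. This contradicts feasibility. Hence $\lambda>0$, and complementary slackness makes \eqref{eq:eesm_convex_form} hold with equality.
\item Next I set $\nu=0$. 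Since $(m,n,r)$ is feasible, the minimizer meets the EESM constraint with transmit power at most $P_{\max}$. In the generic case the budget is slack.
\item If the budget were tight, the same derivation goes through with $\lambda$ replaced by $\lambda/(1+\nu)$. I will note this in a remark rather than treat it separately.
\end{itemize}

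\textbf{Step 2: active subcarriers.} For $k\in\cA$, complementary slackness gives $\xi_k=0$. Stationarity then reduces to a single exponential:
\begin{equation*}
\frac{\lambda a_k}{K+\Ncp}\, e^{-a_k P^*[k]} = 1.
\end{equation*}
Solving gives $P^*[k] = a_k^{-1}\ln\!\big(\lambda a_k/(K+\Ncp)\big)$. Substituting $a_k$ yields \eqref{eq:optimal_power}. Positivity of $P^*[k]$ requires $\lambda a_k > K+\Ncp$.

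\textbf{Step 3: inactive subcarriers.} For $k\in\cI$ we have $P^*[k]=0$, so $\xi_k = 1-\lambda a_k/(K+\Ncp)$. Dual feasibility $\xi_k\ge 0$ then gives $\sigma_1^2[k] \le (K+\Ncp)\beta N_0\Delta f/\lambda$. This is the threshold in the statement, and it is consistent with the positivity condition from Step 2.

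\textbf{Step 4: closed form for $\lambda$.} Split the equality form of \eqref{eq:eesm_convex_form} over $\cA$ and $\cI$:
\begin{itemize}
\item each inactive term contributes $e^{0}=1$;
\item each active term equals $(K+\Ncp)/(\lambda a_k)$ by Step 2.
\end{itemize}
This gives
\begin{equation*}
\frac{1}{\lambda}\sum_{k\in\cA}\frac{1}{a_k} + \frac{|\cI|}{K+\Ncp} = c.
\end{equation*}
Solving for $\lambda$ and restoring $a_k$ and the subscript $m$ gives \eqref{eq:lambda_closed}.

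\textbf{Main obstacle: the active set.} The formula for $\lambda$ depends on $\cA$, and $\cA$ depends on $\lambda$, so this is a fixed point. I plan to resolve it as in waterfilling.
\begin{itemize}
\item Sort the subcarriers by decreasing $\sigma_1^2[k]$. The threshold structure from Step 3 forces $\cA$ to consist of the top-$j$ subcarriers.
\item For $j=1,2,\dots$, compute $\lambda$ from \eqref{eq:lambda_closed}. This requires the denominator $c-|\cI|/(K+\Ncp)$ to be positive, which I would argue follows from feasibility.
\item Accept the largest $j$ for which every active subcarrier satisfies $\lambda a_k > K+\Ncp$ and every inactive one violates it.
\end{itemize}
Uniqueness of this $j$ should follow from strict convexity of each exponential term. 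That strict convexity makes the optimal $\bp$ unique on $\cA$. I expect this uniqueness argument and the positivity of the denominator to be the delicate part, so I defer the details to Appendix~\ref{app:proof_theorem1}.
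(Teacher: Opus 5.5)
Your route is the paper's proof essentially line for line. You show $\lambda>0$ because $\bp=\mathbf{0}$ is infeasible, set $\nu=0$, use $\xi_k=0$ on $\cA$ to reduce stationarity to a single exponential, and substitute into the tight EESM constraint to obtain \eqref{eq:lambda_closed}. Your Step~3 (dual feasibility on $\cI$) and the top-$j$ search spell out the active-set characterization that the appendix only states. Your $\lambda/(1+\nu)$ remark correctly covers a tight budget, which the paper does not address.

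One justification, however, is false as written. $K/(K+\Ncp)$ does \emph{not} exceed $c=\exp(-\gamma_{\text{th}}^{(m)}/\beta)$ for every positive threshold. Because of the $1/(K+\Ncp)$ normalization, $\text{SNR}_{\text{EESM}}(\mathbf{0})=\beta\ln(1+\Ncp/K)>0$. So $\bp=\mathbf{0}$ is infeasible only if $\gamma_{\text{th}}^{(m)}>\beta_m\ln(1+\Ncp/K)$. When this fails, $\bp^\ast=\mathbf{0}$ is optimal, $\cA=\emptyset$, and $\lambda=0$ with $\xi_k=1$ is a valid KKT point. The theorem's conclusion then does not hold, so the condition is a genuine hypothesis rather than a free fact. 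The paper supplies it by asserting the inequality for its particular MCS set and calibrated $\beta_m$. You need to state it as an assumption or verify it the same way.

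The two points you flag as delicate are short. At the optimum, positivity of the denominator in \eqref{eq:lambda_closed} follows from the tight constraint itself: $c-|\cI|/(K+\Ncp)=\lambda^{-1}\sum_{k\in\cA}a_k^{-1}>0$, because $\cA\neq\emptyset$ once $\bp=\mathbf{0}$ is ruled out. Candidates $j$ with a nonpositive denominator are simply discarded.

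For uniqueness, note that the objective is linear, so the strict convexity that matters is that of $g(\bp)=\sum_k e^{-a_kP[k]}$. Take two distinct optima. Their midpoint has the same total power, but $g$ lies strictly below the bound there. One of its positive coordinates can therefore be lowered, contradicting optimality. Hence the support of $\bp^\ast$, and with it your $j$, is unique, so ``largest'' is not needed.
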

\begin{IEEEproof}
See Appendix~\ref{app:proof_theorem1}.
\end{IEEEproof}

\begin{remark}
The allocation in \eqref{eq:optimal_power} is waterfilling-like: a larger $\sigma_1^2[k]$ needs less transmit power, and $\sigma_1^2[k]$ carries the coherent combining gain across the $L$ distributed RUs. Classical waterfilling sets its water level from a total power budget. Here the water level $\lambda$ in \eqref{eq:lambda_closed} follows from the EESM threshold $\gamma_{\text{th}}^{(m)}$ and the calibration parameter $\beta$, which fixes a unique optimal total power analyzed in Section~\ref{sec:results}.
\end{remark}


\section{3D spatio-temporal traffic-aware optimization for D-MIMO}
\label{sec:traffic-aware-d-mimo}

In this section, we develop a control strategy that adapts the RU operation mode to save energy under spatio-temporal traffic dynamics. We extend the fixed-load framework of~\secref{sec:EE_DMIMO} by switching each RU among MIMO, SIMO, and sleep mode as its traffic demand varies. The framework reuses its closed-form power allocation at each scheduling interval. We define the three RU operation modes, derive the demand thresholds that separate them, and present the control algorithm that selects the mode, MCS, and power allocation at each scheduling interval. A mode control strategy needs traffic that varies in space and time. We therefore build a traffic model that captures this variation from realistic user mobility.

\begin{figure}[t]
    \centering
    \subfloat[]{\includegraphics[width=0.90\columnwidth]{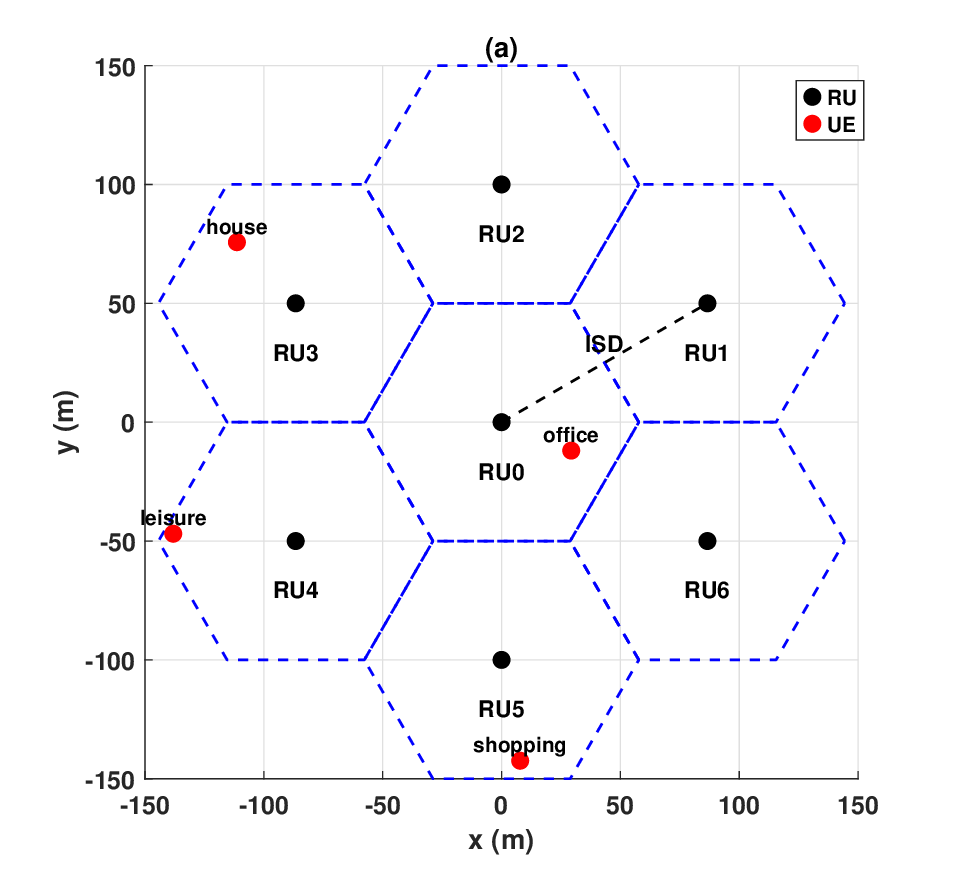}}%
    \label{fig:fig4a_hex_trajectory}
    \\
    \subfloat[]{\includegraphics[width=0.90\columnwidth]{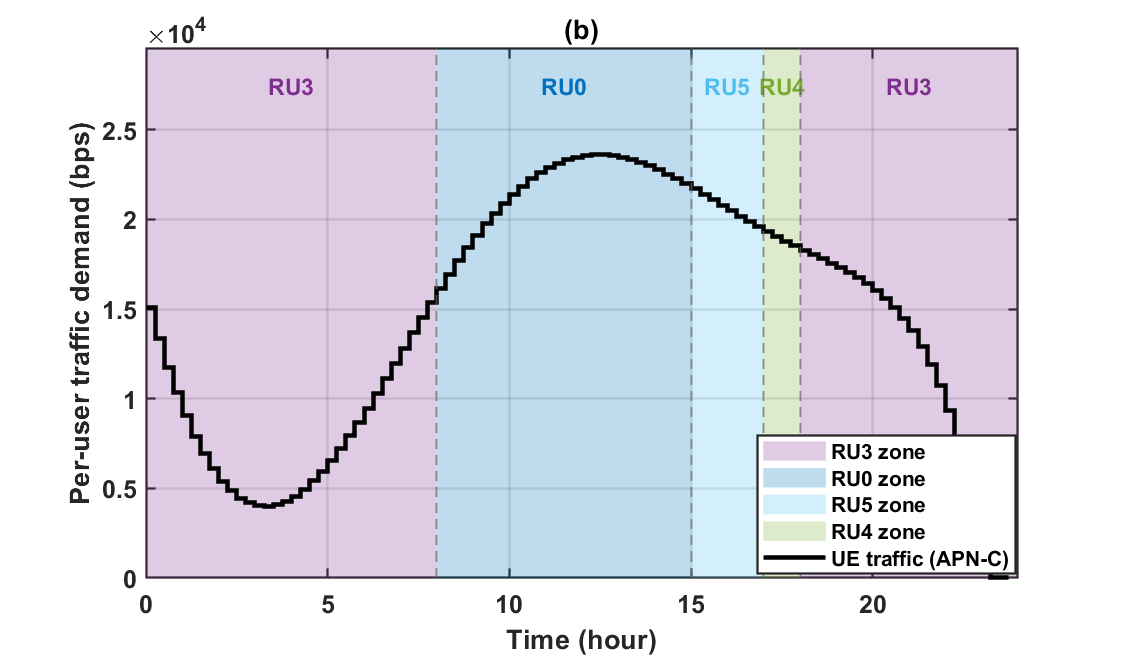}}%
    \label{fig:fig4b_ue_traffic}
    \caption{(a) UMi hexagonal cell layout with $L = 7$ RUs at $d_{\text{ISD}} = 100$~m inter-site distance and a representative daily UE trajectory from MATSim~\cite{AxhausenETHZurichMultiagentTRANsportSimulation2016}, with red markers at activity locations. (b) Per-user traffic demand $\Lambda(t)$ under the APN-C profile over a 24 hour period. Background shading indicates the RU coverage zone occupied by the UE. Diurnal variation and spatial transitions motivate the adaptive RU mode control strategy.}
    \label{fig:hexamap_3d}
\end{figure}

Instantaneous traffic load at each RU determines whether it operates in full MIMO, reduced-rank SIMO, or sleep mode. We deploy RUs over an urban micro (UMi) hexagonal cell layout with inter-site distance $d_{\text{ISD}}$, as shown in Fig.~\ref{fig:hexamap_3d}(a). The RUs operate cell-free under a common physical cell identity, so the UE perceives the deployment as a single cell. At each scheduling instant $t$, the UE associates with its $L_{\text{act}}$ nearest RUs for coherent joint transmission, while the vDU coordinates resource allocation across all active RUs.

The wireless literature offers no standardized model for joint spatial and temporal traffic variation. We construct one by modifying and combining a mobility simulator with a published demand profile. MATSim provides realistic UE trajectories from daily activity plans at the individual agent level~\cite{AxhausenETHZurichMultiagentTRANsportSimulation2016}, capturing spatial patterns that standard stochastic mobility models miss. The Access Point Name (APN) C profile from~\cite{SoosEtAlUserGroupBehavioural2020} provides empirical temporal demand statistics for smartphone subscribers. The two sources lack natural alignment, so we bridge them with two assumptions. The APN-C temporal shape applies to each user independently of location. All spatial demand structure follows from user mobility. We fit a degree-6 polynomial $f(t)$ to the APN-C profile in 15-minute intervals from~\cite[Table~III]{SoosEtAlUserGroupBehavioural2020}. The per-user demand $\Lambda(t)$ scales the polynomial to the 5G NR user experienced peak data rate $R_{\text{peak}}$ as
\begin{equation}
    \Lambda(t) = R_{\text{peak}} \cdot \frac{f(t)}{\max_{t'} f(t')}.
\end{equation}
Figure~\ref{fig:hexamap_3d}(b) shows $\Lambda(t)$ over one day with background shading indicating the RU coverage zone occupied by the UE at each time interval. The APN-C profile exhibits strong diurnal variation, with off-peak demand near midnight falling to roughly 15\% of the daytime peak. As the UE transitions across RU zones, the traffic load shifts among RUs, creating spatially non-uniform and time-varying demand. This structure motivates the adaptive RU mode control strategy in Algorithm~\ref{alg:joint}.

We define three operation modes for each RU $\ell$. MIMO mode activates all $\Nt$ transmit antennas and serves the UE at full rank $r = \Ns$ via SVD precoding. SIMO mode transmits a single stream ($r = 1$) with one RF chain active. Sleep mode deactivates the RF chain and PA, consuming only the idle power $P_{\text{idle}} \ll P_{\text{RU},0}$. Let $\Lambda_\ell(t)$ denote the traffic demand served by RU $\ell$ at time $t$, determined by the UE's association with its $L_{\text{act}}$ nearest RUs. The mode-specific static RU power from \eqref{eq:p_ru0} is $P_{\text{RU},0}^{(\text{MIMO})} = P_{\text{LO}} + \Nt \cdot P_{\text{RF}}$ in MIMO mode and $P_{\text{RU},0}^{(\text{SIMO})} = P_{\text{LO}} + P_{\text{RF}}$ in SIMO mode. We approximate the staircase fronthaul power in \eqref{eq:power_fh} with a linear model, denoting by $\eta_{\text{FH}}$~[W/(stream$\cdot$PRB)] the marginal fronthaul power per additional IQ stream. The per-RU RF, PA, and fronthaul power under each mode is
\begin{equation}
\begin{aligned}
&\begin{cases} 
P_{\text{RU},\ell}^{(\text{mode})} = P_{\text{RU},0}^{(\text{MIMO})} + \tfrac{1}{\eta_{\text{PA}}} P_{\text{tx},\ell}^{(\text{MIMO})}, \\ P_{\text{FH},\ell}^{(\text{mode})} = P_{\text{port}} + \eta_{\text{FH}} \cdot \Ns \cdot n, \end{cases} & \Lambda_\ell(t) \geq \Lambda^*, \\
&\begin{cases} 
P_{\text{RU},\ell}^{(\text{mode})} = P_{\text{RU},0}^{(\text{SIMO})} + \tfrac{1}{\eta_{\text{PA}}} P_{\text{tx},\ell}^{(\text{SIMO})}, \\ P_{\text{FH},\ell}^{(\text{mode})} = P_{\text{port}} + \eta_{\text{FH}} \cdot n, \end{cases} & \Lambda^{\text{sleep}} \leq \Lambda_\ell(t) < \Lambda^*, \\
&\begin{cases} 
P_{\text{RU},\ell}^{(\text{mode})} = P_{\text{idle}}, \\ P_{\text{FH},\ell}^{(\text{mode})} = 0, \end{cases} & \Lambda_\ell(t) < \Lambda^{\text{sleep}}.
\end{aligned}
\label{eq:mode_power_combined}
\end{equation}
When RU $\ell$ enters sleep mode, the vDU deactivates the corresponding fronthaul link, terminating both $P_{\text{port}}$ and the IQ transport overhead.

The key trade-off between MIMO and SIMO is that MIMO achieves a higher $\NTBS(m, n, \Ns)$ through spatial multiplexing, but must meet $\gamma_{\text{th}}^{(m)}$ across all $\Ns$ active streams, driving up the minimum transmit power. The minimum transmit power under each mode follows from Theorem~\ref{thm:optimal_power} with the corresponding $\Ns$, denoted $P_{\text{tx},\ell}^{(\text{MIMO})}(\Lambda)$ and $P_{\text{tx},\ell}^{(\text{SIMO})}(\Lambda)$. The exact system-level crossover condition equates total vRAN power across the two modes as

\begin{equation}
\begin{aligned}
& L_{\text{act}}\!\left[ P_{\text{RU},0}^{(\text{MIMO})} + P_{\text{tx},\ell}^{(\text{MIMO})}(\Lambda^*)/\eta_{\text{PA}} + \eta_{\text{FH}} \Ns n \right] + P_{\text{vDU}}(n, \Ns) \\
& = L_{\text{act}}\!\left[ P_{\text{RU},0}^{(\text{SIMO})} + P_{\text{tx},\ell}^{(\text{SIMO})}(\Lambda^*)/\eta_{\text{PA}} + \eta_{\text{FH}} n \right] + P_{\text{vDU}}(n, 1).
\end{aligned}
\label{eq:crossover_full}
\end{equation}
Dropping the vDU and fronthaul terms from \eqref{eq:crossover_full} gives the simplified RU-only crossover condition
\begin{equation}
P_{\text{RU},0}^{(\text{MIMO})} + \frac{P_{\text{tx},\ell}^{(\text{MIMO})}(\Lambda^*)}{\eta_{\text{PA}}} = P_{\text{RU},0}^{(\text{SIMO})} + \frac{P_{\text{tx},\ell}^{(\text{SIMO})}(\Lambda^*)}{\eta_{\text{PA}}}.
\label{eq:crossover}
\end{equation}
The dropped terms are larger in MIMO mode, so \eqref{eq:crossover} underestimates $\Lambda^*$. The vDU evaluates the exact form \eqref{eq:crossover_full} for $\Lambda^*$ using the EESM-based transmit power from Theorem~\ref{thm:optimal_power}, while \eqref{eq:crossover} serves as the RU-only baseline in Section~\ref{sec:results}. Analogously, the SIMO to sleep crossover threshold $\Lambda^{\text{sleep}}$ equates SIMO total per-RU power with sleep mode power as
\begin{equation}
P_{\text{RU},0}^{(\text{SIMO})} + \frac{P_{\text{tx},\ell}^{(\text{SIMO})}(\Lambda^{\text{sleep}})}{\eta_{\text{PA}}} = P_{\text{idle}}.
\label{eq:sleep_crossover}
\end{equation}
Since $P_{\text{RU},0}^{(\text{SIMO})} > P_{\text{idle}}$, sleep mode is strictly more power-efficient than SIMO whenever $\Lambda_\ell(t) < \Lambda^{\text{sleep}}$.

\begin{algorithm}[t]
\caption{Per-slot MCS selection and power allocation}
\label{alg:joint}
\begin{algorithmic}[1]
\STATE \textbf{Input:} Mode assignment $\{\text{mode}_\ell\}$, channel singular values $\{\sigma_r^2[k]\}_{r \in \cR}$, calibrated parameters $\{\beta_m\}_{m \in \cM}$, SNR thresholds $\{\gamma_{\text{th}}^{(m)}\}_{m \in \cM}$, noise power per subcarrier $N_0 \Delta f$, power budget $P_{\max}$, PRB allocation $n$, traffic demand $\Lambda(t)$
\STATE \textbf{Output:} Optimal configuration $(m^*, r^*)$ and power allocation $\{P^*[k]\}$
\STATE Set $\cR(t) \leftarrow \{1\}$ if $\text{mode}_\ell = \text{SIMO}$, else $\cR(t) \leftarrow \cR$
\STATE Compute $\cM(t)$ from \eqref{eq:feasible_mcs} using $r_{\max}^{(\text{mode})}$ determined by $\text{mode}_\ell$
\IF{$\cM(t) = \emptyset$}
    \STATE Declare outage; terminate
\ENDIF
\FOR{each $(m, r) \in \cM(t) \times \cR(t)$}
    \STATE Compute $P_{\text{static}}(n, r)$ from \eqref{eq:power_static}
    \IF{$r = 1$}
        \STATE Compute $\lambda_m$ from \eqref{eq:lambda_closed}
        \STATE Compute $P^{(m,r)}[k]$ from \eqref{eq:optimal_power} for $k = 0, \ldots, K-1$
    \ELSE
        \STATE Numerically solve \eqref{eq:kkt_stationarity} together with the EESM equality constraint to obtain $P^{(m,r)}[k]$ for $k = 0, \ldots, K-1$
    \ENDIF
    \IF{$\sum_k P^{(m,r)}[k] > P_{\max}$}
        \STATE Set $\eta_{\text{EE}}^{(m,r)} = 0$
    \ELSE
        \STATE Evaluate $\eta_{\text{EE}}^{(m,r)}$ from \eqref{eq:ee_expanded}
    \ENDIF
\ENDFOR
\STATE $(m^*, r^*) = \argmax_{(m,r) \in \cM(t) \times \cR(t)} \; \eta_{\text{EE}}^{(m,r)}$
\STATE $P^*[k] = P^{(m^*, r^*)}[k]$ for $k = 0, \ldots, K-1$
\STATE \RETURN $(m^*, r^*)$, $\{P^*[k]\}_{k=0}^{K-1}$
\end{algorithmic}
\end{algorithm}

The vDU allocates resources in two stages at each scheduling interval. In the outer stage, the vDU evaluates the traffic demand $\Lambda_\ell(t)$ at each RU and assigns an operation mode per \eqref{eq:mode_power_combined}, using the vRAN-aware threshold from \eqref{eq:crossover_full}. In the inner stage, the vDU sets the feasible stream set to $\cR(t) = \{1\}$ for SIMO mode and $\cR(t) = \cR$ for MIMO mode, with maximum rank $r_{\max}^{(\text{mode})} = \max \cR(t)$. The vDU then computes the demand-feasible MCS set
\begin{equation}
\cM(t) = \left\{ m \in \cM : \frac{\NRE^{\text{PRB}}\, R_m\, Q_m\, n\, r_{\max}^{(\text{mode})}}{T_{\text{slot}}} \geq \Lambda(t) \right\},
\label{eq:feasible_mcs}
\end{equation}
which retains only the MCS indices whose peak throughput meets the instantaneous traffic demand $\Lambda(t)$. If $\cM(t) = \emptyset$, the vDU declares an outage for that slot. For each feasible $(m, r)$, the vDU obtains the per-subcarrier power from Theorem~\ref{thm:optimal_power} for SIMO and by solving \eqref{eq:kkt_stationarity} for MIMO. The vDU evaluates the resulting EE from \eqref{eq:ee_expanded} and selects the configuration with the highest EE. Algorithm~\ref{alg:joint} summarizes the full procedure.

\begin{table}[t]
\caption{D-MIMO vRAN simulation parameters\label{tab:sim_params}}
\begin{center}
    {\renewcommand{\arraystretch}{1.1}%
    \small
        \begin{tabular}{c|c|c|c}
        \hline\hline
        Notation & Value & Notation & Value \\
        \hline
        $f_c$ & 3.5 GHz & $P_{\max}$ & 33 dBm \\
        $\Delta f$ & 30 kHz ($\mu$=1) & $R_{\text{peak}}$ & 1 Gbps \\
        $K$ & 3276 & $\eta_{\text{PA}}$ & 0.38 \\
        $\NRB$ & 273 & $P_{\text{RU},0}^{(\text{MIMO})}$ & 2.0 W \\
        $N_{\text{FFT}}$ & 4096 & $P_{\text{RU},0}^{(\text{SIMO})}$ & 1.0 W \\
        $N_{\text{symb}}$ & 14 & $P_{\text{idle}}$ & 0.1 W \\
        $\Nt$ & 2 & $P_{\text{vDU},0}$ & 15 W \\
        $\Nr$ & 2 & $\Delta P_{\text{vDU}}$ & 15 W \\
        $L$ & 7 & $\alpha_{\text{CPU}}$ & $1/(\NRB \cdot \Ns)$ \\
        $L_{\text{act}}$ & 3 & $P_{\text{switch},0}$ & 8 W \\
        $d_{\text{ISD}}$ & 100 m & $P_{\text{port}}$ & 2 W \\
        $N_{\text{cal}}$ & 100 & $C_{\text{link}}$ & 25 Gbps \\
        $p_e^{\text{target}}$ & $10^{-2}$ & $N_0$ & $-174$ dBm/Hz \\
        \hline\hline
        \end{tabular}
    }
\end{center}
\end{table}

\section{Simulation results}
\label{sec:results}

\begin{figure}[t]
    \centering
    \subfloat[]{\includegraphics[width=0.95\columnwidth]{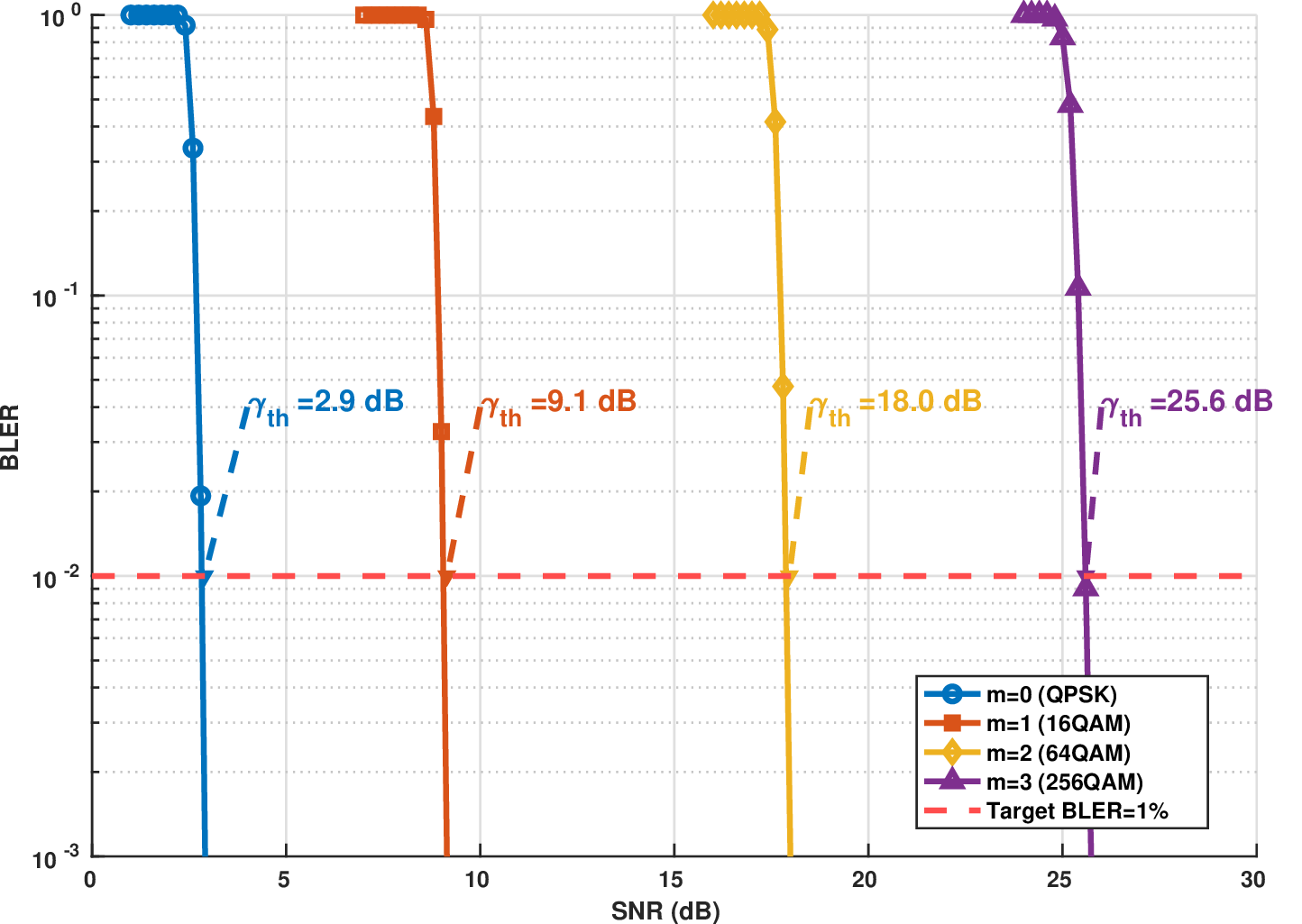}}%
    \label{fig:fig5a_awgn_bler}
    \\
    \subfloat[]{\includegraphics[width=0.95\columnwidth]{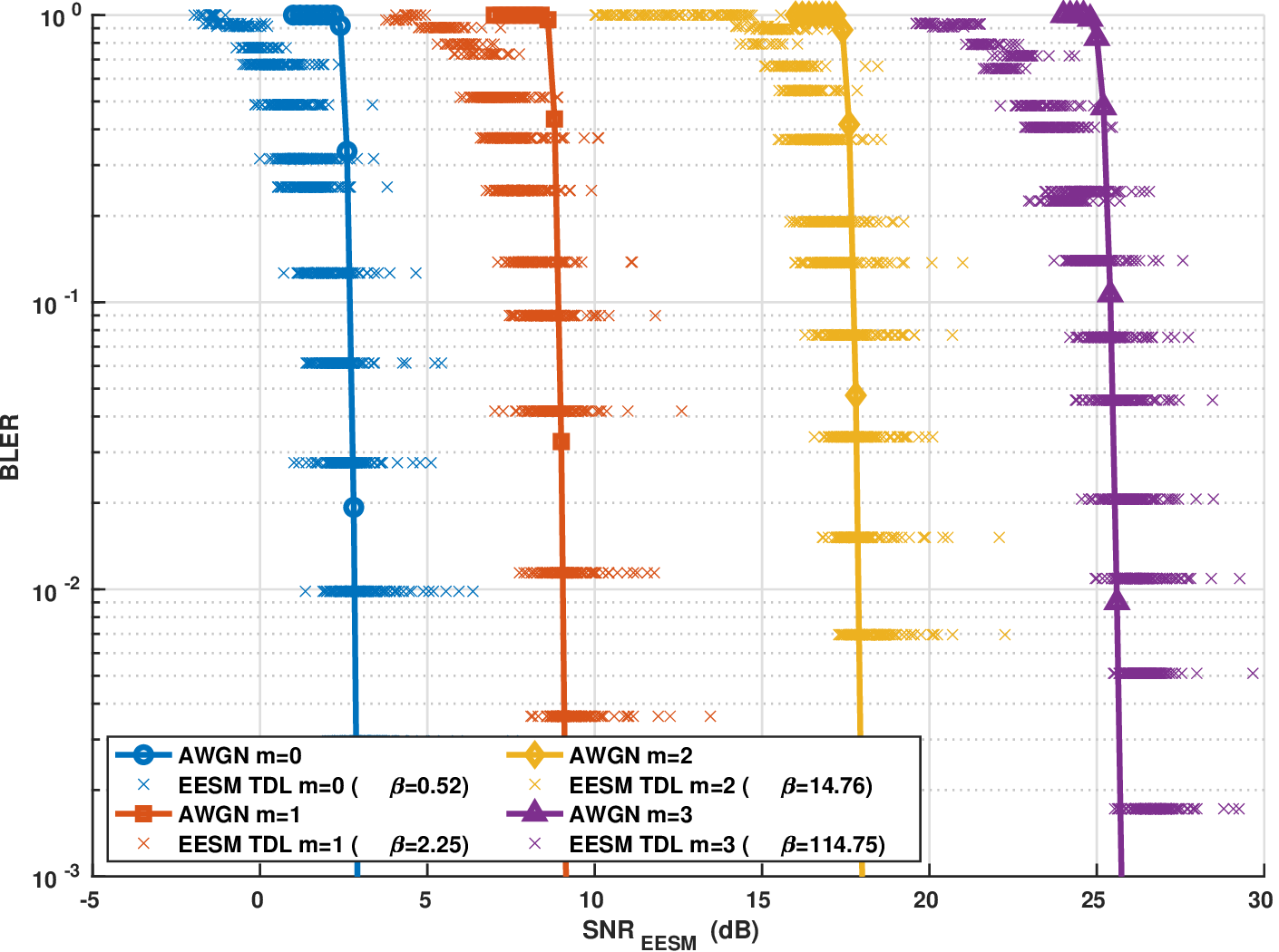}}%
    \label{fig:fig5b_beta_calib}
    \caption{(a) AWGN BLER curves for MCS 0, 1, 2, and 3 under single-input single-output (SISO) transmission with $N_{\text{RB}} = 273$. The SNR threshold $\gamma_{\text{th}}^{(m)}$ at $p_e^{\text{target}} = 10^{-2}$ ranges from 2.9~dB for QPSK to 25.6~dB for 256-QAM, serving as the AWGN reference for Algorithm~\ref{alg:beta_calibration}. (b) EESM calibration validation. The scatter points confirm that a single $\beta_m$ accurately maps the TDL channel to an equivalent AWGN point.}
    \label{fig:fig5_link_abstraction}
\end{figure}

We evaluate the proposed framework with three goals. First, we verify that a single offline-calibrated $\beta_m$ predicts BLER without per-realization link-level simulation. Second, we show that the vRAN-aware power model widens the traffic range where SIMO saves power beyond the RF-only estimate. Third, we show that traffic-aware mode control raises average EE. We developed a C++ based link-level simulator interfaced with MATLAB that implements the 3GPP NR PDSCH chain~\cite{3gppTS38.211, 3gppTS38.212, 3gppTS38.214}. The simulator includes LDPC coding, OFDM modulation, and LMMSE detection with perfect receiver-side channel knowledge, using $10^5$ realizations per SNR point. We draw the small-scale fading for each RU from an independent TDL-C realization, since the inter-RU spacing exceeds the UMi decorrelation distance~\cite{3gppTR38.901}. We calibrate the BLER curves against the 3GPP technical specifications for NR physical layer requirements. \tabref{tab:sim_params} summarizes the simulation parameters, following~\cite{AuerEtAlHowMuchEnergy2011} and our prior work~\cite{ParkEtAlSwitchDFTAdaptiveWaveform2026}. We set $n = \NRB$ since the single-user setting allocates full bandwidth to the active UE.

\subsection{EESM calibration accuracy}
\label{subsec:calibration_results}

We validate the EESM calibration by comparing EESM-predicted BLER against full link-level simulation results over TDL-C channels with a 100~ns delay spread. Figure~\ref{fig:fig5_link_abstraction}(a) shows the AWGN BLER reference curves for MCS 0, 1, 2, and 3, generated by sweeping the SNR over an AWGN channel. The SNR threshold $\gamma_{\text{th}}^{(m)}$ at target BLER $10^{-2}$ ranges from 2.9~dB for MCS~0 QPSK to 25.6~dB for MCS~3 256-QAM. Figure~\ref{fig:fig5_link_abstraction}(b) shows the calibration validation. For each MCS and SNR point, we generate $N_{\text{cal}} = 100$ independent TDL realizations and compute $\text{SNR}_{\text{EESM}}$ using $\beta_m$. The scatter points align with the AWGN reference curves across all MCS indices. A single $\beta_m$ therefore maps the frequency-selective TDL channel to an equivalent AWGN point. The calibrated parameters are listed in \tabref{tab:beta_values}.

\subsection{Power consumption and mode crossover analysis}
\label{subsec:power_analysis}

\begin{figure}[t]
\centering
\includegraphics[width=0.95\columnwidth]{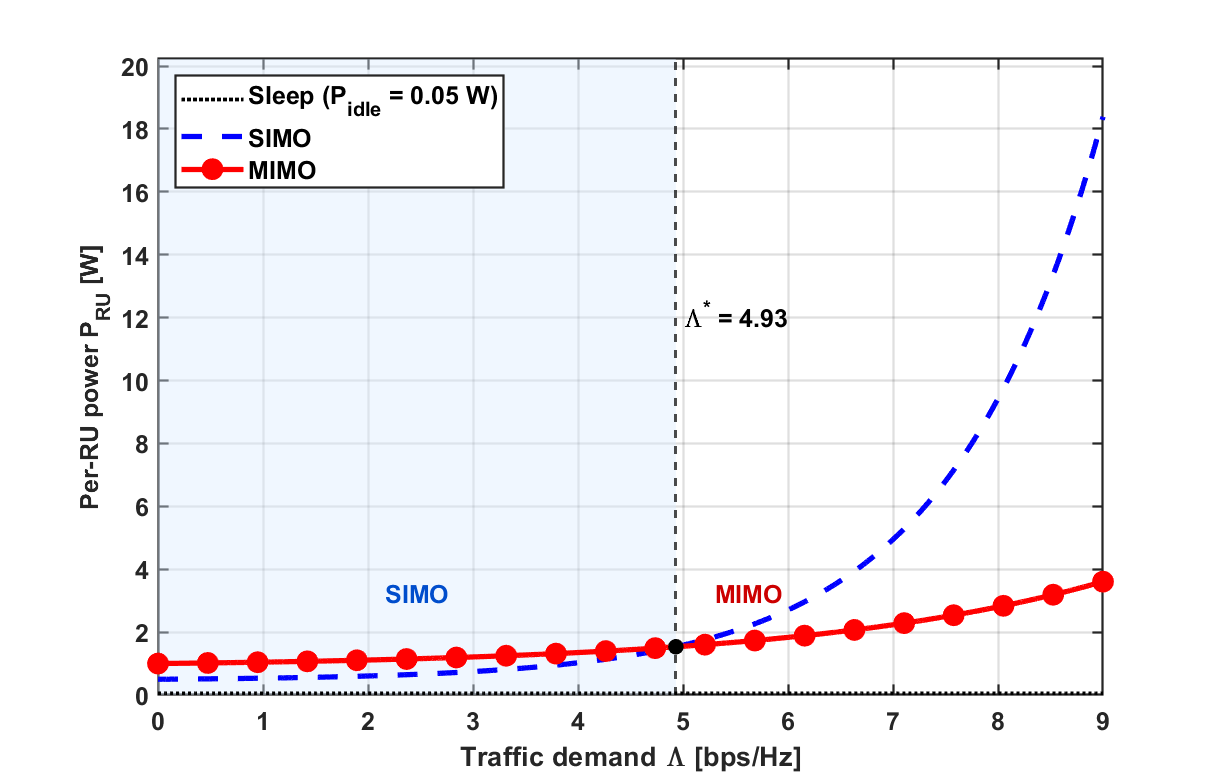}
\caption{Per-RU power consumption versus traffic demand $\Lambda$ for MIMO, SIMO, and sleep modes under an RF-only model~\cite{KimEtAlCrosslayerApproachEnergy2009} with $\Nt = 2$ antennas per RU. The MIMO/SIMO crossover occurs at $\Lambda^* = 4.93$~bps/Hz, below which SIMO is more power-efficient due to reduced RF chain overhead.}
\label{fig:fig6}
\end{figure}

\begin{figure}[t]
\centering
\includegraphics[width=0.95\columnwidth]{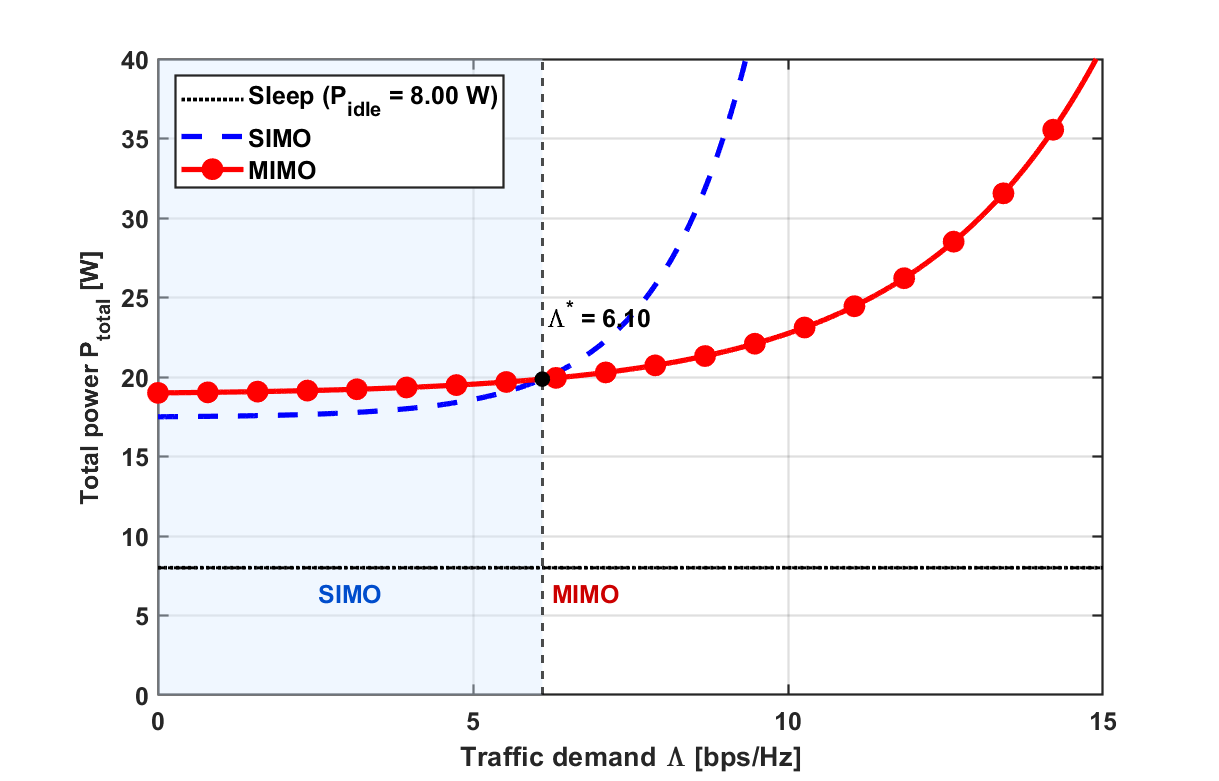}
\caption{Total system power consumption versus traffic demand $\Lambda$ under the proposed vRAN-aware model including vDU baseband and fronthaul transport. The crossover shifts to $\Lambda^* = 6.10$~bps/Hz, a 24 \% increase over the RF-only estimate in \figref{fig:fig6}, extending the SIMO power-efficient range.}
\label{fig:fig7}
\end{figure}

\begin{table}[t]
\caption{Calibrated EESM parameters $\beta_m$ for MCS indices $m \in \{0, 1, 2, 3\}$ over TDL-C channels.}
\label{tab:beta_values}
\centering
\small
\begin{tabular}{cccc}
\hline
$\beta_0$ & $\beta_1$ & $\beta_2$ & $\beta_3$ \\
\hline
0.5213 & 2.2514 & 14.7577 & 114.7487 \\
\hline
\end{tabular}
\end{table}

We analyze the per-RU and total system power consumption as a function of traffic demand $\Lambda$ to determine the MIMO/SIMO crossover threshold $\Lambda^*$. Prior RF-only models compute $\Lambda^*$ based on RU transmit and static power alone~\cite{KimEtAlCrosslayerApproachEnergy2009}, while the proposed vRAN-aware model additionally accounts for vDU baseband processing and fronthaul transport.

Figure~\ref{fig:fig6} shows the per-RU power under RF-only modeling. The MIMO and SIMO curves cross at $\Lambda^* = 4.93$~bps/Hz. Below this threshold, SIMO consumes less power due to its lower static RF chain overhead $P_{\text{RU},0}^{(\text{SIMO})}$. Above it, MIMO becomes more power-efficient by spreading traffic demand across two spatial streams, reducing the per-stream transmit power requirement.

Figure~\ref{fig:fig7} shows the total system power including vDU and fronthaul. The crossover shifts to $\Lambda^* = 6.10$~bps/Hz, a 24\% increase over the RF-only estimate, due to additional vDU dynamic power and the fronthaul overhead incurred when switching to MIMO.

\subsection{Energy efficiency performance}
\label{subsec:ee_results}

We evaluate the EE performance of the proposed framework. We first examine the per-MCS EE under static channel conditions, then analyze the daily EE profile under MATSim-generated traffic. The evaluation applies EESM link abstraction to 100 TDL-C realizations per operating point.

\subsubsection{Without traffic constraints}
\label{subsubsec:ee_no_traffic}

Figure~\ref{fig:fig8_dmimo_ee_mcs} shows the per-MCS EE as a function of TDL operating SNR under the optimization framework in Section~\ref{sec:EE_DMIMO}. The step-function approximation in \eqref{eq:bler_step} sets EE to zero while the BLER stays above the target $p_e^{\text{target}} = 10^{-2}$, and it raises EE to a peak once the BLER reaches the target. This transition marks successful decoding, not the onset of system operation. The BLER reaches the target at $\gamma_{\text{th}}^{(m)}$, an AWGN reference. Over the TDL-C channel, frequency-selective fading adds a margin, so the target BLER is met at a higher channel SNR. QPSK reaches the target near $10$~dB, above its $2.9$~dB AWGN threshold. The EE then decays as the transmit power in the denominator of \eqref{eq:ee_metric} grows faster than the fixed numerator $\NTBS / T_{\text{slot}}$. Unlike classical waterfilling, which raises capacity monotonically with transmit power, the EE-optimal allocation fixes a unique optimal total power through $\lambda_m$ in \eqref{eq:lambda_closed}. Algorithm~\ref{alg:joint} selects the MCS that maximizes EE at each SNR point, tracing the upper envelope of the four MCS curves. For reference, we plot the idealized Shannon-capacity EE by replacing the EESM throughput with $\NRE^{\text{PRB}} n r \log_2(1 + \text{SNR}_{\text{EESM}}) / T_{\text{slot}}$ under the same power model. The idealized curve exceeds the EESM envelope by $1.4$--$2.0\times$ over the MCS-3 peak operating range ($32$--$45$~dB). It remains positive at MCS threshold gaps where the EESM throughput drops to zero. This gap quantifies how far the practical MCS-discrete EE operates below the continuous-rate ideal.

\begin{figure}[t]
\centering
\includegraphics[width=0.95\columnwidth]{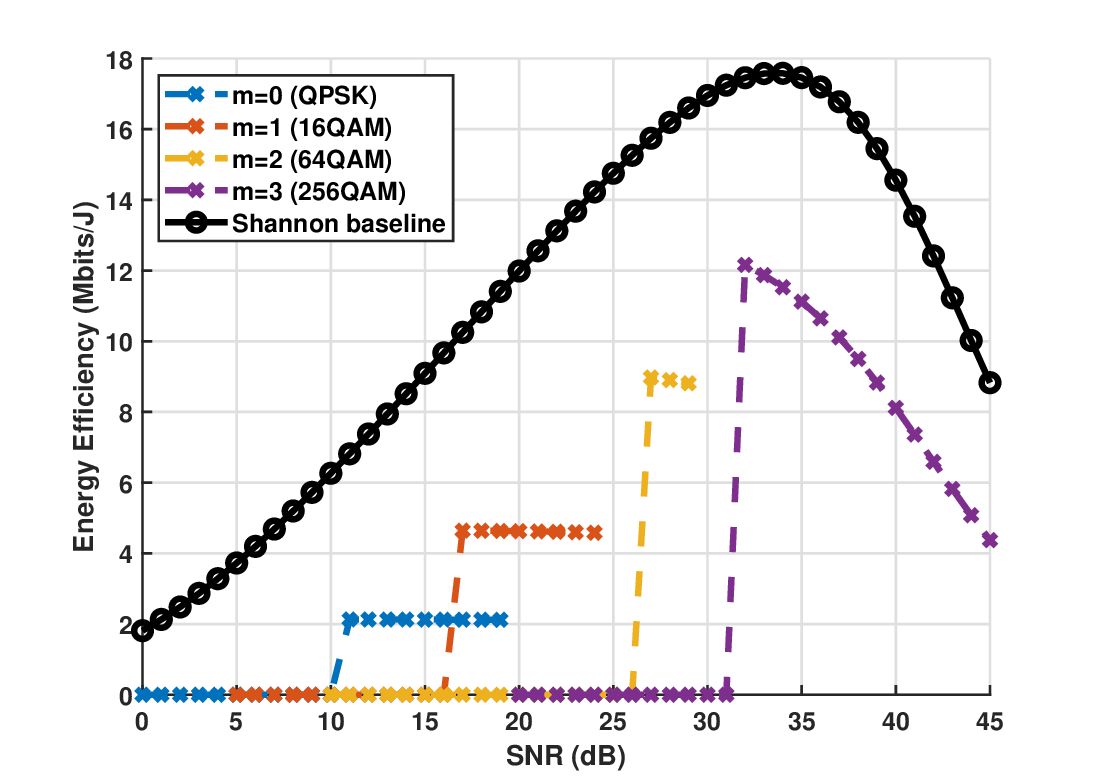}
\caption{Energy efficiency versus TDL-C channel SNR for all MCS with $N_{\text{RB}} = 273$. Each curve jumps to a peak once the BLER reaches the target $10^{-2}$, which the TDL-C channel meets above the AWGN threshold by a fading margin. Higher-order MCS reaches greater peak EE at a higher required SNR. The Shannon-capacity curve marks the idealized continuous-rate upper bound, which the MCS-discrete EESM envelope approaches under practical 3GPP constraints.}
\label{fig:fig8_dmimo_ee_mcs}
\end{figure}

\begin{remark}
A higher MCS raises both the rate $R_m$ and the threshold $\gamma_{\text{th}}^{(m)}$, so each MCS is EE-optimal only in a narrow SNR window above its threshold, and the EE-optimal MCS shifts with the operating SNR.
\end{remark}

\subsubsection{Daily traffic-aware optimization}
\label{subsubsec:ee_traffic}

Figure~\ref{fig:hexatraffic_3d} shows the 30 day average traffic demand $\Lambda(t)$ across all RUs at four representative time slots: 00:00, 07:00, 14:00, and 21:00, selected from the full 24 hour simulation. The outer loop transitions each RU among MIMO, SIMO, and sleep modes based on $\Lambda_\ell(t)$ relative to the crossover threshold $\Lambda^*$ and sleep threshold $\Lambda^{\text{sleep}}$ derived from \eqref{eq:crossover}. 

\begin{figure*}[t]
    \centering
    \includegraphics[width=0.95\textwidth]{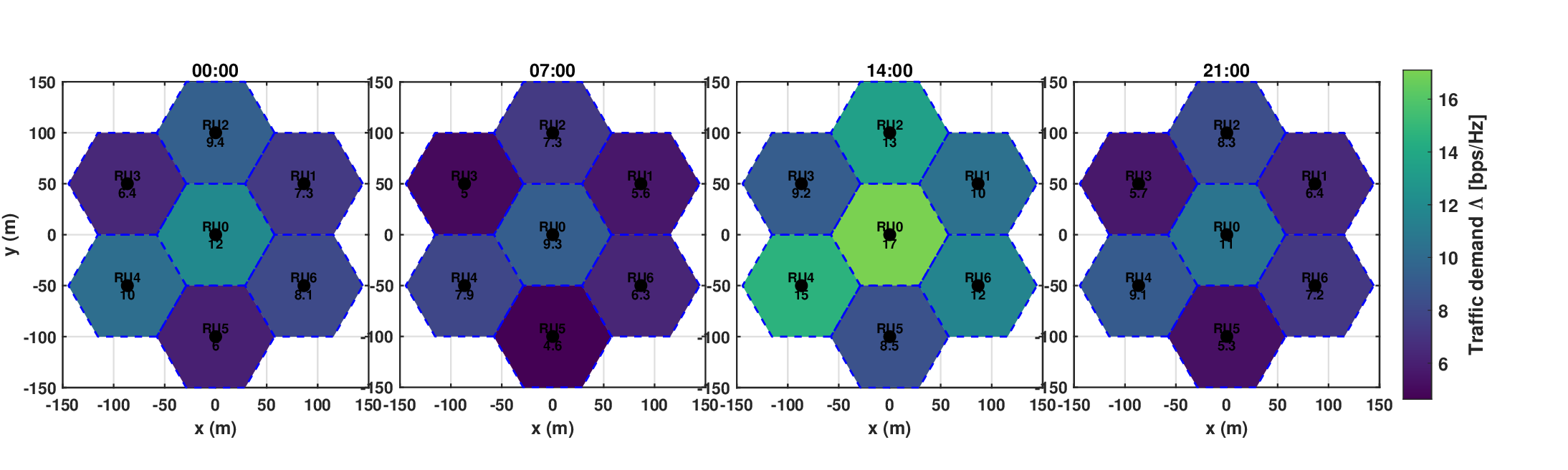}
    \caption{Thirty-day average aggregate traffic demand per RU coverage area for $L = 7$ RUs in a UMi layout with $d_{\text{ISD}} = 100$ m, at four representative time slots from a 24 hour MATSim simulation under the APN-C traffic profile. Color intensity indicates aggregate traffic load in bps per cell. Traffic concentrates at central RUs during daytime hours and drops at night, confirming the diurnal spatio-temporal variation that motivates adaptive RU mode selection.}
    \label{fig:hexatraffic_3d}
\end{figure*}

Figure~\ref{fig:daily_ee} shows the vRAN EE over a 24 hour period for the proposed traffic-aware framework with $L_{\text{act}} = 3$ and three traffic-agnostic always-on MIMO baselines with $L_{\text{act}} \in \{3, 5, 7\}$. The vDU updates the mode, MCS, and power allocation at each 15 minute scheduling slot, yielding 96 slots over the full day. The four configurations share the same EESM-based link abstraction, power model, and MCS selection, differing only in RU mode control The $L_{\text{act}} = 5$ and $L_{\text{act}} = 7$ baselines nearly match the proposed framework at peak hours as the coherent combining gain saturates, but fall below the proposed framework during off-peak hours when static MIMO power dominates. The $L_{\text{act}} = 3$ baseline incurs the largest PA power penalty since fewer RUs must each transmit at higher power to satisfy the EESM threshold. The proposed framework achieves 23~Mbits/J, or 32\%, average EE gain over the $L_{\text{act}} = 3$ baseline by transitioning RUs into SIMO and sleep modes when $\Lambda(t)$ falls below $\Lambda^*$. This result demonstrates that traffic-aware mode adaptation yields larger EE gain than simply activating more RUs.

\begin{figure}[tb]
\centering
\includegraphics[width=1.05\columnwidth]{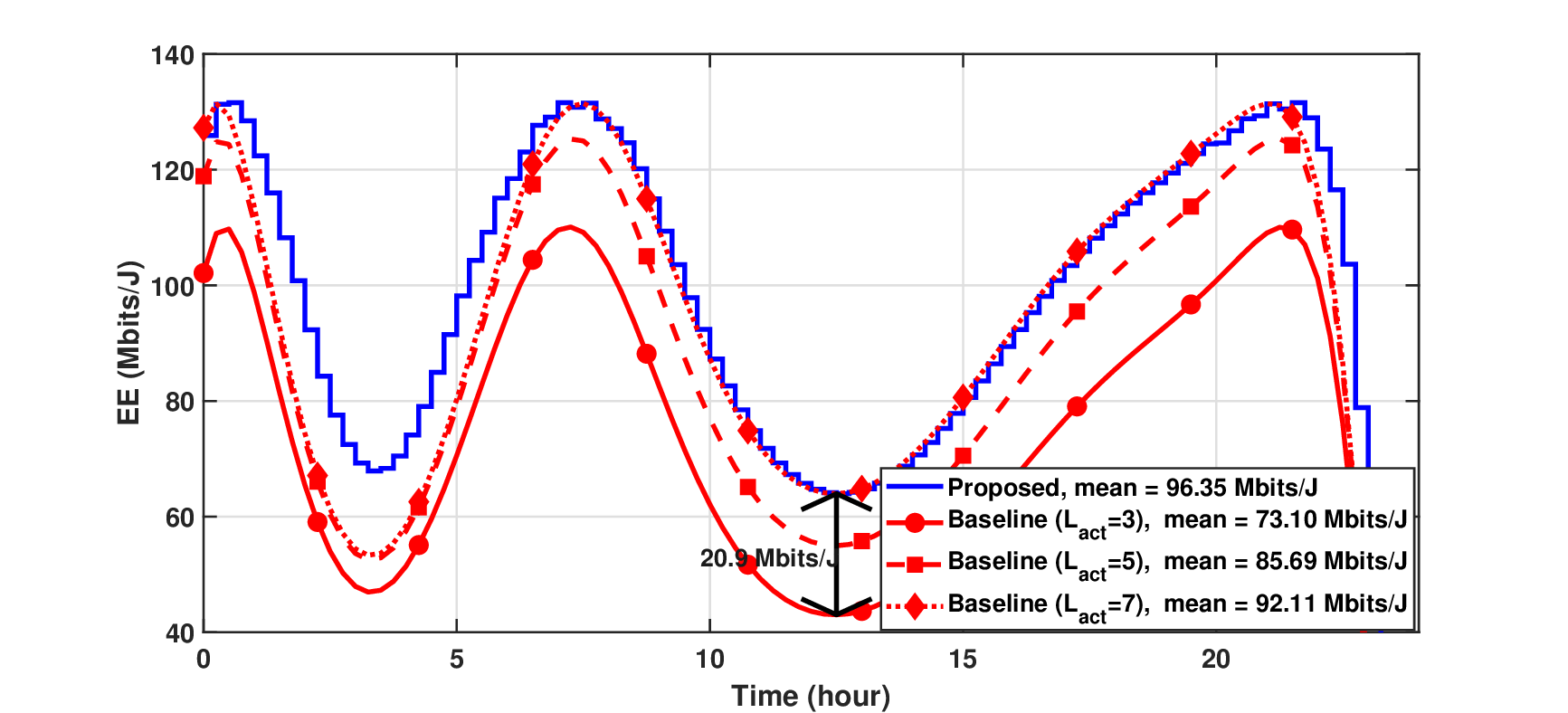}
\caption{Per 15-minute slot vRAN energy efficiency over a 24 hour period with $\Nt = 2$ antennas per RU. The proposed traffic-aware framework with $L_{\text{act}} = 3$ achieves 23~Mbits/J, or 32\%, average EE gain over the $L_{\text{act}} = 3$ always-on MIMO baseline. The proposed framework outperforms always-on baselines at $L_{\text{act}} \in \{3, 5, 7\}$ throughout the day by transitioning RUs among MIMO, SIMO, and sleep modes according to $\Lambda^*$.}
\label{fig:daily_ee}
\vspace{-5pt}
\end{figure}

\section{Conclusion and future work}
\label{sec:conclusion}

In this paper, we proposed a cross-layer energy efficiency (EE) optimization framework for distributed MIMO (D-MIMO) vRAN systems. The framework reformulated EE maximization as a convex power minimization problem with a closed-form, waterfilling-like solution. This framework is not tied to 3GPP 5G NR. It depends only on the finite cardinality of the MCS, PRB, and rank sets, so it remains relevant for 6G and beyond once their parameter values are specified. The vRAN-aware power model revealed that vDU and fronthaul overhead shifted the MIMO/SIMO crossover threshold by 24\% beyond the RF-only estimate. This shift extended the traffic range over which SIMO operation was more energy-efficient. The traffic-aware mode control achieved a 32\% average EE gain over a traffic-agnostic, always-on MIMO baseline at $L_{\text{act}} = 3$. The framework achieved a 12\% average gain over the $L_{\text{act}} = 5$ baseline and 5\% over the $L_{\text{act}} = 7$ baseline, concentrated in off-peak hours when SIMO and sleep modes dominated. These baselines closely matched the framework at peak hours. 

Our work assumed perfect CSI via TDD reciprocity and perfect synchronization across distributed RUs. In practice, RU synchronization offsets, power-dependent PA efficiency, and correlated fading across nearby RUs sharing scatterers would degrade the effective SNR and shift the optimal MCS and power allocation. The reported EE gains should be interpreted as an upper bound. The single-user assumption also limited applicability to multi-user deployments. In multi-user scenarios, channel estimation errors at the transmitter cause inter-user interference that per-user receivers cannot fully cancel. Future work will extend the framework to such scenarios with joint precoder design and per-user power allocation. Incorporating per-stream waterfilling across unequal singular values may further improve MIMO-mode EE, albeit at the cost of increased optimization complexity. Finally, the Open RAN near-real-time RIC offers a natural platform for coordinating RU mode control across cells. Digital twins complement the analytical EESM framework with high-fidelity simulation. Combining closed-form optimization with digital-twin verification can bridge analytical tractability and deployment realism.

\appendices

\section{Proof of Theorem~\ref{thm:optimal_power}}
\label{app:proof_theorem1}

We set $r = 1$ in the multi-stream Lagrangian and use $a_k \equiv a_{1,k} = \sigma_1^2[k]/(\beta N_0 \Delta f)$. The KKT conditions for problem \eqref{eq:power_min} consist of primal feasibility, dual feasibility $\lambda, \nu, \xi_k \geq 0$, complementary slackness, and the stationarity condition obtained by specializing \eqref{eq:kkt_stationarity} to $r = 1$,
\begin{equation} 
    1 + \nu - \frac{\lambda a_k}{K+\Ncp} e^{-a_k P[k]} - \xi_k = 0. 
\label{eq:kkt_stationarity_ss} 
\end{equation}

We first show $\lambda > 0$. If $\lambda = 0$, stationarity forces $\xi_k = 1 > 0$ for all $k$, so complementary slackness sets $P^*[k] = 0$. This violates the EESM constraint, since $K/(K+\Ncp) > \exp(-\gamma_{\text{th}}^{(m)}/\beta_m)$ for all MCS indices. The objective minimizes $\mathbf{1}^T \mathbf{p}$, so the optimum meets the EESM constraint with the smallest power. Whenever the total power stays below $P_{\max}$, complementary slackness sets $\nu = 0$.

For an active subcarrier with $P^*[k] > 0$, we have $\xi_k = 0$. The stationarity condition then gives
\begin{equation}
e^{-a_k P^*[k]} = \frac{K+\Ncp}{\lambda a_k}.
\label{eq:proof_step2}
\end{equation}
Taking the logarithm and substituting $a_k$ yields \eqref{eq:optimal_power}. Since $\lambda > 0$, the EESM constraint holds with equality. Substituting the active-subcarrier solution and $P^*[k] = 0$ on the inactive set $\cI$ into the constraint and solving for $\lambda$ yields \eqref{eq:lambda_closed}. A subcarrier is inactive when $\lambda a_k/(K+\Ncp) \leq 1$, which makes the logarithm in \eqref{eq:optimal_power} nonpositive. \hfill$\blacksquare$


\bibliographystyle{IEEEtran}
\bibliography{JP_EE}

\end{document}